\def\<{\langle}
\def\>{\rangle}
\newcommand{\tr}{\mathrm{Tr}}
\newcommand{\Tr}{\mathrm{Tr}}
\DeclareMathAlphabet\mathbfcal{OMS}{cmsy}{b}{n}
\mathchardef\mhyphen="2D 
\newtheorem{Theorem}{Theorem}
\newtheorem{Lemma}{Lemma}
\newtheorem{Corollary}{Corollary}
\newtheorem{Definition}{Definition}
\newtheorem{Remark}{Remark}
\begin{document}

\title{Measures from conical 2-designs depend only on two constants}

\author{Katarzyna Siudzi\'{n}ska}
\affiliation{Institute of Physics, Faculty of Physics, Astronomy and Informatics, Nicolaus Copernicus University in Toru\'{n}, ul.~Grudzi\k{a}dzka 5, 87--100 Toru\'{n}, Poland}

\begin{abstract}
Quantum measurements are important tools in quantum information, represented by positive, operator-valued measures. A wide class of symmetric measurements is given via generalized equiangular measurements that form conical 2-designs. We show that only two positive constants are needed to fully characterize a variety of important quantum measures constructed from such operators. Examples are given for entropic uncertainty relations, the Brukner-Zeilinger invariants, quantum coherence, quantum concurrence, and the Schmidt-number criterion for entanglement detection.
\end{abstract}

\flushbottom

\maketitle

\thispagestyle{empty}

\section{Introduction}

In quantum information theory, quantum measurements are positive, operator-valued measures (POVMs). By defitnition, a POVM $\mathcal{P}=\{P_k;\,k=1,\ldots,M\}$ consists of positive operators $P_k$ such that $\sum_{k=1}^MP_k=\mathbb{I}_d$, where $\mathbb{I}_d$ denotes the identity operator. Among the most researched classes of projective measurements, there are symmetric, informationally complete (SIC) POVMs \cite{Renes} and projectors onto the vectors from mutually unbiased bases (MUBs) \cite{Schwinger,Szarek}. Both have been generalized by Kalev and Gour to non-projective measurements: general SIC (gSIC) POVMs and mutually unbiased measurements (MUMs) \cite{Kalev,Gour}. Due to their symmetry properties, these measurements find wide theoretical and experimental applications in quantum tomography \cite{Prugovecki,Scott,ZhuEnglert,PetzRuppert,Pimenta,Bent}, state discrimination \cite{Brunner}, communication protocols \cite{Zhou,Song,Bouchard}, steering \cite{Lai2,OperationalSICs,Bene}, teleportation \cite{Siendong}, entanglement detection \cite{ESIC,EW-SIC,KalevBae,Blume,SM_Pmaps}, and many more \cite{Medendorp,Tavakoli,Tavakoli2,Smania,Hirsch}.

There have been increasing attempts to introduce alternative generalizations of SIC POVMs and MUBs \cite{semi-SIC,EOM22,EOM24,EOMq3}. $(N,M)$-POVMs are collections of $N$ mutually unbiased POVMs with $M$ elements each \cite{SIC-MUB}. If the number of elements is different, one considers generalized symmetric measurements \cite{SIC-MUB_general}. If these POVMs are taken with different weights, we recover generalized equiangular measurements (GEAMs). It has been shown that they form an informationally overcomplete set and, under additional assumptions, also conical 2-designs \cite{Graydon,Graydon2}. Despite being a relatively recent construction, GEAMs have already found applications in uncertainty relations \cite{Rastegin_GEAM}, steering inequalities \cite{steering}, Kirkwood-Dirac quasiprobabilities \cite{Kirkwood}, and entanglement detection \cite{GEAM_Pmaps}.

In this paper, we show that the knowledge of the two constants that characterize a conical 2-design is enough to calculate various quantum measures. In particular, we present this in the general cases of the Brukner-Zeilinger invariants, quantum coherence, entropic uncertainty relations, quantum concurrence, and the Schmidt-number criterion for entanglement detection. Our observations could indicate on a more general property followed by specific families of measures for all conical quantum 2-designs.

The paper is organized as follows. In Section 2, we recall the basic properties of generalized equiangular measurements, their method of construction, and conditions under which they form conical 2-designs. It is shown how to recover specific examples of measurements, like the SIC POVMs, MUBs, and their generalizations. The remaining sections contain examples of measures, constructed from GEAMs that are conical 2-designs, that are characterized only by two constants. In Section 3, we calculate entropic uncertainty relations for the Tsallis and R{\'e}nyi entropies. Section 4 deals with the Brukner-Zeilinger invariants in terms of the total variance. Quantum coherence measure using skew information is considered in Section 5. Entanglement detection methods with the Schmidt number criterion and concurrence lower bound are analyzed in Sections 6 and 7, respectively. Finally, Section 8 contains a summary of our results, together with a list of open questions for further research.

\section{Generalized equiangular measurements}

Consider a set of rank-1 projectors $P_k$ ($k=1,\ldots,M$) that act on a finite-dimensional Hilbert space $\mathcal{H}\simeq\mathbb{C}^d$. These operators form an equiangular tight frame if and only if, for all $k\neq\ell$, $\Tr(P_kP_\ell)=c\Tr(P_k)\Tr(P_\ell)$ and additionally $\sum_{k=1}^MP_k=\gamma\mathbb{I}_d$ with some positive parameter $\gamma$ \cite{Strohmer}. This notion can be generalized to positive operators of arbitrary rank \cite{GEAM}. From a collection of $N$ generalized equiangular tight frames $\mathcal{P}_\alpha=\{P_{\alpha,k}:\,k=1,\ldots,M_\alpha\}$, $\alpha=1,\ldots,N$, with elements summing up to $\sum_{k=1}^{M_\alpha}P_{\alpha,k}=\gamma_\alpha\mathbb{I}_d$, we define a positive, operator-valued measure (POVM). In addition, we impose the complementarity condition $\Tr(P_{\alpha,k}P_{\beta,\ell})=f\Tr(P_{\alpha,k})\Tr(P_{\beta,\ell})$, $\alpha\neq\beta$, between the elements from different frames \cite{PetzRuppert}.

\begin{Definition}\label{geam}(\cite{GEAM})
The generalized equiangular measurement (GEAM) $\mathcal{P}=\cup_{\alpha=1}^N\mathcal{P}_\alpha$ is a collection of $N$ generalized equiangular tight frames $\mathcal{P}_\alpha=\{P_{\alpha,k}:\,k=1,\ldots,M_\alpha\}$ such that
\begin{enumerate}[label=(\arabic*)]
\item the elements of a single frame sum up to $\sum_{k=1}^{M_\alpha}P_{\alpha,k}=\gamma_\alpha\mathbb{I}_d$ with a probability distribution $\gamma_\alpha$;
\item the total number of operators $P_{\alpha,k}$ is bounded by $\sum_{\alpha=1}^NM_\alpha=d^2+N-1$;
\item they satisfy the following trace conditions,
\begin{equation}
\begin{split}
\Tr(P_{\alpha,k})&=a_\alpha,\\
\Tr(P_{\alpha,k}^2)&=b_\alpha \Tr(P_{\alpha,k})^2,\\
\Tr(P_{\alpha,k}P_{\alpha,\ell})&=c_\alpha
\Tr(P_{\alpha,k})\Tr(P_{\alpha,\ell}),\qquad k\neq\ell,\\
\Tr(P_{\alpha,k}P_{\beta,\ell})&=
f\Tr(P_{\alpha,k})\Tr(P_{\beta,\ell}),\qquad \alpha\neq\beta,
\end{split}
\end{equation}
where the symmetry parameters
\begin{equation}
a_\alpha=\frac{d\gamma_\alpha}{M_\alpha},\qquad c_\alpha=\frac{M_\alpha-db_\alpha}{d(M_\alpha-1)},
\qquad f=\frac 1d,
\end{equation}
\begin{equation}\label{ba}
\frac 1d <b_\alpha\leq\frac 1d \min\{d,M_\alpha\}.
\end{equation}
\end{enumerate}
\end{Definition}

Note that the generalized equiangular measurements are informationally complete for $N=1$ and informationally overcomplete otherwise. In particular, inside each equiangular tight frame $\mathcal{P}_\alpha$, $M_\alpha-1$ out of the total of $M_\alpha$ operators are linearly independent.

To construct the generalized equiangular measurements, take a Hermitian orthonormal operator basis
\begin{equation}\label{Ga}
\mathcal{G}=\{G_0=\mathbb{I}_d/\sqrt{d},G_{\alpha,k}:\,k=1,\ldots,M_\alpha-1;\,\alpha=1,\ldots,N\}
\end{equation}
such that $\Tr(G_{\alpha,k})=0$. Next, introduce another family of traceless operators $H_{\alpha,k}$ via
\begin{equation}\label{H}
H_{\alpha,k}=\left\{\begin{aligned}
&G_\alpha-\sqrt{M_\alpha}(1+\sqrt{M_\alpha})G_{\alpha,k},\quad k=1,\ldots,M_\alpha-1,\\
&(1+\sqrt{M_\alpha})G_\alpha,\qquad k=M_\alpha,
\end{aligned}\right.
\end{equation}
where $G_\alpha=\sum_{k=1}^{M_\alpha-1}G_{\alpha,k}$. Finally, we find \cite{GEAM}
\begin{equation}\label{Pak}
P_{\alpha,k}=\frac{a_\alpha}{d}\mathbb{I}_d+\tau_\alpha H_{\alpha,k}
\end{equation}
with a real parameter
\begin{equation}
\tau_\alpha=\pm\sqrt{\frac{S_\alpha}{M_\alpha(\sqrt{M_\alpha}+1)^2}},\qquad
{\rm where}\qquad S_\alpha=a_\alpha^2(b_\alpha-c_\alpha),
\end{equation}
chosen in such a way that $P_{\alpha,k}$ is a positive operator. Conversely, when given a GEAM, one recovers the corresponding Hermitian operator basis via
\begin{equation}\label{Gak}
G_{\alpha,k}= \frac{1}{\tau_\alpha M_\alpha(1+\sqrt{M_\alpha})^2}\Big[\mathbb{I}_d
+\sqrt{M_\alpha}P_{\alpha,M_\alpha}-\sqrt{M_\alpha}(1+\sqrt{M_\alpha})P_{\alpha,k}\Big].
\end{equation}
Importantly, a single $\mathcal{G}$ can give rise to many inequivalent families of measurement operators \cite{SIC-MUB_general}. However, this property is not needed for further discussions of this paper.

Among the GEAMs, there is a distinguished class of measurements that form conical 2-designs \cite{Graydon,Graydon2}. By definition, $\mathcal{P}$ is a conical 2-design if
\begin{equation}\label{con}
\sum_{\alpha=1}^N\sum_{k=1}^{M_\alpha}P_{\alpha,k}\otimes P_{\alpha,k}=
\kappa_+\mathbb{I}_d\otimes\mathbb{I}_d+\kappa_-\mathbb{F}_d,
\end{equation}
where $\kappa_+\geq\kappa_->0$ and $\mathbb{F}_d=\sum_{m,n=0}^{d-1}|m\>\<n|\otimes|n\>\<m|$ denotes the flip operator. This condition is satisfied as long as $S_\alpha=a_\alpha^2(b_\alpha-c_\alpha)\equiv S$ for all $\alpha=1,\ldots,N$. The admissible range of the parameter $S$ is
\begin{equation}\label{Srange}
0<S\leq \min\left\{\frac{d\gamma_\alpha^2}{M_\alpha},\frac{d-1}{M_\alpha-1}\frac{d\gamma_\alpha^2}{M_\alpha}\right\},
\end{equation}
and then the positive coefficients $\kappa_\pm$ from eq. (\ref{con}) read
\begin{equation}\label{kappas2}
\kappa_+=\mu-\frac{S}{d},\qquad\kappa_-=S,\qquad {\rm where}\qquad \mu=\frac 1d \sum_{\alpha=1}^Na_\alpha\gamma_\alpha.
\end{equation}
Importantly, it has been shown in ref. \cite{GEAM} that there is a direct relation between the state purity $\Tr(\rho^2)$ and the index of coincidence \cite{Rastegin5}
\begin{equation}\label{pak}
\mathcal{C}(\rho)=\sum_{\alpha=1}^N\sum_{k=1}^{M_\alpha}p_{\alpha,k}^2,\qquad p_{\alpha,k}=\Tr(P_{\alpha,k}\rho)
\end{equation}
if and only if $P_{\alpha,k}$ form a conical 2-design. Then,
\begin{equation}\label{IOCN}
\mathcal{C}(\rho)=S\left(\Tr\rho^2-\frac 1d\right)+\mu,
\end{equation}
which is upper bounded by
\begin{equation}
\mathcal{C}_{\max}=\frac{d-1}{d}S+\mu.
\end{equation}
In Table 1, we present popular examples of the generalized equiangular measurements together with the explicit values of their defining parameters, including $S$ and $\mathcal{C}_{\max}$. Note that for families of quantum measurements, like the mutually unbiased bases and measurements, the corresponding constants are rescaled because they have to sum up to the identity operator.

\begin{table}[ht]
\centering
\begin{tabular}{|c|c|c|c|c|c|c|c|}
\hline
Measurements & MUBs & MUMs & SIC POVM & gSIC POVM & $(N,M)$-POVM \\
\hline
$N$ & $d+1$ & $d+1$ & $1$ & $1$ & $N$ \\
\hline
$M_\alpha$ & $d$ & $d$ & $d^2$ & $d^2$ & $M$ \\
\hline
$\gamma_\alpha$ & $\displaystyle\frac{1}{d+1}$ & $\displaystyle\frac{1}{d+1}$ & $1$ & $1$ & $\displaystyle\frac{1}{N}$ \\
\hline
$a_\alpha$ & $\displaystyle\frac{1}{d+1}$ & $\displaystyle\frac{1}{d+1}$ & $\displaystyle\frac{1}{d}$ & $\displaystyle\frac{1}{d}$ & $\displaystyle\frac{d}{NM}$ \\
\hline
$b_\alpha$ & $1$ & $\displaystyle\frac 1d<b\leq 1$ & $1$ & $\displaystyle\frac 1d<b\leq 1$ & $\displaystyle\frac 1d<b\leq \displaystyle\frac 1d \min\{d,M\}$ \\
\hline
$c_\alpha$ & $0$ & $\displaystyle\frac{1-b}{d-1}$ & $\displaystyle\frac{1}{d+1}$ & $\displaystyle\frac{d-b}{d^2-1}$ & $\displaystyle\frac{M-db}{d(M-1)}$ \\
\hline
$\mu$ & $\displaystyle\frac{1}{d(d+1)}$ & $\displaystyle\frac{1}{d(d+1)}$ & $\displaystyle\frac{1}{d^2}$ & $\displaystyle\frac{1}{d^2}$ & $\displaystyle\frac{1}{NM}$ \\
\hline
$S$ & $\displaystyle\frac{1}{(d+1)^2}$ & $\displaystyle\frac{db-1}{(d+1)(d^2-1)}$ & $\displaystyle\frac{1}{d(d+1)}$ & $\displaystyle\frac{db-1}{d(d^2-1)}$ & $\displaystyle\frac{d(db-1)}{NM(d^2-1)}$ \\
\hline
$\mathcal{C}_{\max}$ & $\displaystyle\frac{2}{(d+1)^2}$ & $\displaystyle\frac{b+1}{(d+1)^2}$ & $\displaystyle\frac{2}{d(d+1)}$ & $\displaystyle\frac{b+1}{d(d+1)}$ & $\displaystyle\frac{d(b+1)}{NM(d+1)}$ \\
\hline
\end{tabular}
\caption{Parameters characterizing conical 2-design GEAMs for popular classes of quantum measurements.}
\label{table1}
\end{table}

In what follows, we analyze a handful of measures constructed from conical 2-designs generalized equiangular measurements. We observe an interesting property. It turns out that all of the considered measures can be fully characterized by knowing only two constants: the symmetry constant $S$ and the maximal index of coincidence $\mathcal{C}_{\max}$. On a sidenote, we also manage to reproduce -- and rewrite in a simpler form -- many results known in the literature for MUMs and general SIC POVMs, while also generalizing these results to other classes of measurements.

\begin{Remark}
Alternatively, instead of knowing the values of $S$ and $\mathcal{C}_{\max}$, it is enough to know the constants that characterize the conical 2-design;
\begin{equation}
\kappa_+=\mathcal{C}_{\max}-S,\qquad \kappa_-=S.
\end{equation}
\end{Remark}

\section{Entropic uncertainty relations}

Entropy is a fundamental concept in information theory, statistical mechanics, and quantum physics, providing a quantitative measure of uncertainty. Given a probability distribution $p_j$, the Shannon entropy $H=-\sum_jp_j\log p_j$ measures the amount of information that it contains. In this section, we consider its two one-parameter generalizations: the R\'{e}nyi entropy \cite{Renyi}
\begin{equation}\label{R}
\mathcal{R}_\nu=\frac{1}{1-\nu}\log\left(\sum_jp_j^\nu\right)
\end{equation}
and the Tsallis entropy \cite{Tsallis}
\begin{equation}\label{T}
\mathcal{T}_\nu=-\sum_jp_j^\nu\log_\nu p_j,
\end{equation}
where the $\nu$-logarithm is defined via \cite{Rastegin5}
\begin{equation}
\log_\nu p_j=\frac{p_j^{1-\nu}-1}{1-\nu}.
\end{equation}
Specifically, for $\nu\to 1$, both formulas reproduce the standard Shannon entropy. The Tsallis entropy finds important applications e.g. in Bell inequalities \cite{Wajs}, quantum key distribution \cite{Kurzyk}, and quantum resource theory \cite{imaginarity}, whereas the R\'{e}nyi entropy is mainly used in quantification of quantum entanglement \cite{Franchini,Bertini,Mannai,ERE}.

Now, for the probability disctibution $p_{\alpha,k}=\Tr(\rho P_{\alpha,k})$ constructed from a GEAM $\mathcal{P}=\{P_{\alpha,k}\}$, we use eqs. (\ref{R}--\ref{T}) to construct
\begin{equation}
\mathcal{R}_\nu=\frac{1}{1-\nu}\log\left(\sum_{\alpha=1}^N
\sum_{k=1}^{M_\alpha}p_{\alpha,k}^\nu\right),
\end{equation}
\begin{equation}
\mathcal{T}_\nu=\frac{1}{1-\nu}\left(\sum_{\alpha=1}^N
\sum_{k=1}^{M_\alpha}p_{\alpha,k}^\nu-1\right).
\end{equation}
In what follows, we show that the generalized entropies are upper bounded by functions of the index of coincidence alone.

\begin{Theorem}
The Renyi $\nu$-entropies for the probability distribution $p_{\alpha,k}$ associated with a conical 2-design GEAM are upper bounded by
\begin{equation}\label{Rn}
\mathcal{R}_\nu\geq \frac{\nu}{2(1-\nu)}\log\mathcal{C}
\end{equation}
for any $\nu\geq 2$.
\end{Theorem}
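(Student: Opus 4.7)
The plan is to reduce the claimed Rényi bound to the classical monotonicity of discrete $\ell_p$-norms. For $\nu\geq 2$ the factor $1-\nu$ is strictly negative, so multiplying the claimed inequality by $1-\nu$ (which flips the direction) and then exponentiating turns $\mathcal{R}_\nu\geq\frac{\nu}{2(1-\nu)}\log\mathcal{C}$ into the equivalent statement
\begin{equation*}
\sum_{\alpha=1}^{N}\sum_{k=1}^{M_\alpha} p_{\alpha,k}^{\nu}\;\leq\;\mathcal{C}^{\nu/2},
\end{equation*}
i.e.\ $\|p\|_\nu\leq\|p\|_2$ for the probability vector generated by the GEAM. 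At the endpoint $\nu=2$ this becomes an equality with the collision entropy $\mathcal{R}_{2}=-\log\mathcal{C}$, which is a reassuring sanity check on the inequality direction (the statement's phrasing ``upper bounded'' should be read as lower bound, consistent with the displayed $\geq$).

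I would then establish this reduced inequality directly, without invoking any deep norm theorem. Setting $q_{\alpha,k}:=p_{\alpha,k}^{2}\geq 0$, the definition (\ref{pak}) gives $\sum_{\alpha,k}q_{\alpha,k}=\mathcal{C}$. In particular each individual $q_{\alpha,k}$ is bounded above by $\mathcal{C}$, and because $\nu/2-1\geq 0$ the map $x\mapsto x^{\nu/2-1}$ is monotone on $[0,\mathcal{C}]$, so $q_{\alpha,k}^{\nu/2-1}\leq \mathcal{C}^{\nu/2-1}$ pointwise. Multiplying through by $q_{\alpha,k}\geq 0$ and summing gives
\begin{equation*}
\sum_{\alpha,k}q_{\alpha,k}^{\nu/2}\;\leq\;\mathcal{C}^{\nu/2-1}\sum_{\alpha,k}q_{\alpha,k}\;=\;\mathcal{C}^{\nu/2},
\end{equation*}
which is exactly the reduced inequality after resubstituting $q_{\alpha,k}=p_{\alpha,k}^{2}$.

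It is worth emphasizing where, and where not, the conical 2-design hypothesis is used. The pointwise norm inequality above holds for any nonnegative sequence and does not itself require the GEAM structure. The 2-design enters only through the identity (\ref{IOCN}), which rewrites $\mathcal{C}$ cleanly as $S(\Tr\rho^{2}-1/d)+\mu$, so that the right-hand side of (\ref{Rn}) is seen to depend on the measurement only through the two constants $S$ and $\mu$ (equivalently $S$ and $\mathcal{C}_{\max}$), matching the ``two-constant'' theme of the paper. I do not foresee any substantive obstacle; the only care required is bookkeeping of the inequality direction when multiplying by the negative quantity $1-\nu$ and ensuring the nonnegativity of the exponent $\nu/2-1$ is invoked exactly where the pointwise estimate is used.
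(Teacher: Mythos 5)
Your proposal is correct and follows essentially the same route as the paper: both reduce the claim to the norm inequality $\bigl(\sum p_{\alpha,k}^\nu\bigr)^{1/\nu}\leq\bigl(\sum p_{\alpha,k}^2\bigr)^{1/2}=\sqrt{\mathcal{C}}$ for $\nu\geq 2$ and then divide by the negative factor $1-\nu$. The only difference is that the paper cites this monotonicity of $\ell_p$-norms, whereas you supply its short elementary proof via the pointwise bound $q_{\alpha,k}\leq\mathcal{C}$; you also correctly note that the displayed $\geq$ is a lower bound despite the wording.
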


\begin{proof}
Using the methodology from ref. \cite{SICMUB_entropic}, it is enough to prove that, for a probability distribution $p_j$,
\[\left(\sum_jp_j^\alpha\right)^{1/\alpha}\leq\left(\sum_jp_j^2\right)^{1/2}
=\sqrt{\mathcal{C}},\]
as long as $\nu\geq 2$. From this inequality, eq. (\ref{Rn}) immediately follows.
\end{proof}

\begin{Theorem}
The Tsallis $\nu$-entropies for the probability distribution $p_{\alpha,k}$ associated with a conical 2-design GEAM is upper bounded by
\begin{equation}\label{Hn}
\mathcal{T}_\nu\geq -\log_\nu\mathcal{C},
\end{equation}
for any $0<\nu\leq 2$.
\end{Theorem}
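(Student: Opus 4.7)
The plan is to mirror the proof of Theorem~1 and reduce the claim to a universal inequality for probability distributions, with the conical 2-design structure entering only through the identification of $\mathcal{C}=\sum_{\alpha,k}p_{\alpha,k}^2$ with the index of coincidence. Unfolding the definitions of $\mathcal{T}_\nu$ and of the $\nu$-logarithm, the target inequality $\mathcal{T}_\nu\geq -\log_\nu\mathcal{C}$ is equivalent to
\[
\frac{\sum_{\alpha,k}p_{\alpha,k}^\nu+\mathcal{C}^{1-\nu}-2}{1-\nu}\geq 0,
\]
so the real task is to compare the moment $\sum_{\alpha,k}p_{\alpha,k}^\nu$ with an appropriate power of $\mathcal{C}$ in each subrange of $\nu$.

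The natural tool is Lyapunov's inequality, i.e.\ the log-convexity of $r\mapsto\log\sum_j p_j^r$, which is an immediate consequence of H\"older. Anchoring it at $r=1$ (value $0$) and $r=2$ (value $\log\mathcal{C}$) produces $\sum_j p_j^\nu\geq\mathcal{C}^{\nu-1}$ on $0<\nu<1$ and the reversed inequality on $1<\nu\leq 2$. On the lower subrange this combines with the elementary AM--GM bound $\mathcal{C}^{\nu-1}+\mathcal{C}^{1-\nu}\geq 2$ (the two summands being reciprocal positive numbers) to give $\sum_j p_j^\nu+\mathcal{C}^{1-\nu}\geq 2$, which after division by $1-\nu>0$ is exactly the desired inequality. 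The value $\nu=1$ follows by continuity and reproduces the classical Harremo\"es--Tops\o e bound $H\geq -\log\mathcal{C}$, which also offers a useful sanity check that the two sides are packaged consistently.

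The main obstacle I expect is the regime $1<\nu\leq 2$: there $1-\nu<0$ flips the direction of the reformulated inequality, but the log-convex moment bound $\sum_j p_j^\nu\leq\mathcal{C}^{\nu-1}$ only gives $\sum_j p_j^\nu+\mathcal{C}^{1-\nu}\leq\mathcal{C}^{\nu-1}+\mathcal{C}^{1-\nu}$, and the right-hand side sits on the wrong side of $2$, so the same AM--GM step no longer closes. Recovering the claim on this upper subrange would require either a sharper moment input--for instance, a Jensen estimate applied to the convex $\nu$-logarithm, or an escort-distribution reformulation in the spirit of R\'enyi--Tsallis duality--or an argument that exploits the conical 2-design constraint $\mathcal{C}\leq\mathcal{C}_{\max}$ from eq.~(\ref{IOCN}). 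This is where I would concentrate the bulk of the effort, and I would double-check the endpoint $\nu=2$ against the identity $\sum_j p_j^2=\mathcal{C}$ before committing to a particular approach.
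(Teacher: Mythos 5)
Your reduction to $\bigl(\sum_{\alpha,k}p_{\alpha,k}^\nu+\mathcal{C}^{1-\nu}-2\bigr)/(1-\nu)\ge 0$ is a faithful unfolding of the statement as printed, and your Lyapunov-plus-AM--GM argument does settle $0<\nu\le 1$. But the obstruction you hit on $(1,2]$ is not a missing trick: the inequality as literally written is false there. At $\nu=2$ one has $\mathcal{T}_2=1-\mathcal{C}$, while $-\log_2\mathcal{C}=\mathcal{C}^{-1}-1=(1-\mathcal{C})/\mathcal{C}$, so $\mathcal{T}_2<-\log_2\mathcal{C}$ whenever $\mathcal{C}<1$ --- which always holds here, since $\mathcal{C}\le\mathcal{C}_{\max}<1$. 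The culprit is that $-\log_\nu x\ne\log_\nu(1/x)$ for $\nu\ne1$; the bound actually established in the cited reference (Rastegin's Proposition~1) is $\mathcal{T}_\nu\ge\log_\nu(\mathcal{C}^{-1})=(\mathcal{C}^{\nu-1}-1)/(1-\nu)$, and the theorem should be read in that form. So the gap you flagged is real, and no amount of extra work (escort distributions, the constraint $\mathcal{C}\le\mathcal{C}_{\max}$, etc.) will close it for the literal right-hand side.

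With the right-hand side corrected, your own machinery finishes the whole range at once: the target becomes $\bigl(\sum_{\alpha,k}p_{\alpha,k}^\nu-\mathcal{C}^{\nu-1}\bigr)/(1-\nu)\ge 0$, which is exactly your Lyapunov inequality in each subrange (lower bound on the $\nu$-moment for $\nu<1$, upper bound for $1<\nu\le2$), with no AM--GM step needed. The paper's own proof is a one-line appeal to Jensen's inequality: writing $\mathcal{T}_\nu=\sum_j p_j\log_\nu(1/p_j)$ and using convexity of $x\mapsto\log_\nu(1/x)$ on $(0,1]$, which holds precisely for $\nu\le2$, gives $\mathcal{T}_\nu\ge\log_\nu\bigl(1/\sum_jp_j^2\bigr)$ directly. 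The two routes are essentially equivalent (Lyapunov is itself a Jensen/H\"older statement), so once the bound is restated correctly your argument is complete and, up to packaging, coincides with the intended one; your $\nu=2$ sanity check, had you carried it out, would have exposed the sign issue immediately.
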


\begin{proof}
The proof is exactly as in Proposition 1 from ref. \cite{Rastegin5}. It uses Jensen's inequality \cite{Sanchez,Maassen,Dragomir}
\begin{equation}
\sum_jp_j\log_\nu p_j\leq \log\sum_jp_j^2
\end{equation}
in the range of $\nu$ where $\log_\nu$ is a convex function.
\end{proof}

In particular, the Shannon entropy ($\nu=1$) is bounded by
\begin{equation}
\mathcal{T}_1\geq-\log\mathcal{C}.
\end{equation}
Notably, less tight but easier to compute bounds for $\mathcal{R}_\nu$ and $\mathcal{T}_\nu$ are recovered if one replaces the index of coincidence $\mathcal{C}$ in eqs. (\ref{Rn}--\ref{Hn}) by its upper bound $\mathcal{C}_{\max}$. However, even without doing so, the upper bounds for both generalized entropies depend only on $S$ and $\mathcal{C}_{\max}$ due to
\begin{equation}
\mathcal{C}(\rho)=\mathcal{C}_{\max}+S(\Tr\rho^2-1).
\end{equation}
For the GEAMs that are not conical 2-designs, the entropies are bounded by a much more complicated formula, where the index of coincidence does not appear \cite{Rastegin_GEAM}.

Our results recover the entropy bounds for the following special subclasses of GEAMs: MUBs \cite{Rastegin5}, MUMs \cite{ChenFei,entropic_GSIC}, SIC POVMs \cite{Rastegin5}, general SIC POVMs \cite{entropic_GSIC}, and $(N,M)$-POVMs \cite{SICMUB_entropic}. Note that, when calculated for families of POVMs, the results in the original papers differ from ours by a constant factor due to rescaling.

\section{Brukner-Zeilinger invariants}

As an alternative to entropy, Brukner and Zeilinger \cite{BZI1,BZI2} introduce information measures for any finite dimensional quantum state $\rho$ over a complete set of non-degenerate observables $P_j$, where $p_j=\Tr(\rho P_j)$. The total information is defined as
\begin{equation}\label{I}
\mathcal{I}(\rho)=\sum_j\left(p_j-\frac 1d\right)^2=\Tr(\rho^2)-\frac 1d.
\end{equation}
The missing information is expressed via the total uncertainty
\begin{equation}\label{U}
\mathcal{U}(\rho)=\sum_jp_j(1-p_j)=1-\Tr(\rho^2).
\end{equation}
These are known as the Brukner-Zeilinger invariants, where both measures are trivially invariant with respect to any unitary transformations of the quantum state $\rho$. Under the action of bistochastic maps, these invariants can be only decreasing \cite{BZI5}.

An alternative approach to the Brukner-Zeilinger invariants is proposed by Luo \cite{BZI3}. The sum of variances for a complete orthogonal set of observables in the same quantum state results in the total variance
\begin{equation}\label{V}
\mathcal{V}(\rho)=\sum_j\left[\Tr(\rho P_j^2)-(\Tr\rho P_j)^2\right]=d-\Tr(\rho^2).
\end{equation}
It turns out that this is related to the Brukner-Zeilinger total information and total uncertainty by
\begin{equation}\label{IU}
\mathcal{I}(\rho)=\mathcal{V}_{\max}-\mathcal{V}(\rho),\qquad
\mathcal{U}(\rho)=\mathcal{V}(\rho)-\mathcal{V}_{\min}.
\end{equation}
Notably, the minimal $\mathcal{V}_{\min}=\mathcal{V}(\rho_\ast)$ and maximal $\mathcal{V}_{\max}=\mathcal{V}(P)$ variances are reached on the maximally mixed state $\rho_\ast=\mathbb{I}_d/d$ and a rank-1 projector $P$, respectively. Hence, $\mathcal{I}(\rho)$ and $\mathcal{U}(\rho)$ can be viewed as renormalized total variance \cite{BZI4}.

The total variance for the probability distribution $p_{\alpha,k}=\Tr(\rho P_{\alpha,k})$ associated with the GEAM $\mathcal{P}=\{P_{\alpha,k}\}$ is given by
\begin{equation}\label{VV}
\mathcal{V}(\rho)=\sum_{\alpha=1}^N\sum_{k=1}^{M_\alpha}\left[
\Tr(\rho P_{\alpha,k}^2)-(\Tr\rho P_{\alpha,k})^2\right]
=\Tr\left(\rho \sum_{\alpha=1}^N\sum_{k=1}^{M_\alpha}P_{\alpha,k}^2\right)
-\mathcal{C}(\rho),
\end{equation}
where $\mathcal{C}(\rho)$ is the index of coincidence. To simplify the trace formula, we need the following lemma.

\begin{Lemma}\label{suma}
If the GEAM $\mathcal{P}=\{P_{\alpha,k}\}$ forms a conical 2-design, then
\begin{equation}\label{P2}
\sum_{\alpha=1}^N\sum_{k=1}^{M_\alpha}P_{\alpha,k}^2=[\mathcal{C}_{\max}+(d-1)S]\mathbb{I}_d.
\end{equation}
\end{Lemma}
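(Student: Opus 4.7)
The plan is to extract $\sum_{\alpha,k} P_{\alpha,k}^2$ from the defining conical 2-design identity (\ref{con}) by contracting it with the flip operator. Specifically, I would multiply both sides of (\ref{con}) by $\mathbb{F}_d$ on the right and take the partial trace over the second tensor factor. The key algebraic identity is
\begin{equation}
\Tr_2\bigl[(A\otimes B)\,\mathbb{F}_d\bigr]=AB,\qquad A,B\in\mathrm{End}(\mathbb{C}^d),
\end{equation}
which follows at once from $\mathbb{F}_d=\sum_{m,n}|m\>\<n|\otimes|n\>\<m|$. Applied term-by-term to the left-hand side of (\ref{con}), this produces exactly $\sum_{\alpha,k}P_{\alpha,k}^2$.

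For the right-hand side I need the two auxiliary partial traces $\Tr_2[\mathbb{F}_d]=\mathbb{I}_d$ (immediate from the same explicit form of $\mathbb{F}_d$) and $\Tr_2[\mathbb{F}_d^2]=\Tr_2[\mathbb{I}_d\otimes\mathbb{I}_d]=d\,\mathbb{I}_d$, using $\mathbb{F}_d^2=\mathbb{I}_{d^2}$. Combining these steps yields
\begin{equation}
\sum_{\alpha=1}^N\sum_{k=1}^{M_\alpha}P_{\alpha,k}^2=(\kappa_++d\kappa_-)\,\mathbb{I}_d.
\end{equation}

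To finish, I substitute the values of $\kappa_\pm$ recorded in (\ref{kappas2}) and the Remark, namely $\kappa_+=\mathcal{C}_{\max}-S$ and $\kappa_-=S$, obtaining the coefficient $\mathcal{C}_{\max}-S+dS=\mathcal{C}_{\max}+(d-1)S$, which is (\ref{P2}).

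I do not anticipate any real obstacle: the whole argument is a one-line contraction once the partial-trace-of-swap identity is in hand. An alternative route would be to insert the explicit parametrization (\ref{Pak}) of $P_{\alpha,k}$ in terms of $H_{\alpha,k}$ and evaluate $\sum_{\alpha,k}H_{\alpha,k}^2$ directly from the trace conditions of Definition \ref{geam}; this works but is considerably more tedious and obscures the structural fact that the result is really just a partial trace of the conical 2-design identity.
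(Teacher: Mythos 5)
Your proof is correct, and it takes a genuinely different route from the paper's. You contract the conical 2-design identity (\ref{con}) with the flip operator, using $\Tr_2[(A\otimes B)\,\mathbb{F}_d]=AB$, $\Tr_2[\mathbb{F}_d]=\mathbb{I}_d$ and $\mathbb{F}_d^2=\mathbb{I}_{d^2}$ to get $\sum_{\alpha,k}P_{\alpha,k}^2=(\kappa_++d\kappa_-)\mathbb{I}_d$ in one step; the substitution $\kappa_+=\mathcal{C}_{\max}-S$, $\kappa_-=S$ then gives exactly (\ref{P2}). The paper instead works at the level of the Hermitian operator basis: it invokes the completeness identity $\sum_{j}G_j^2=d\,\mathbb{I}_d$, subtracts the $G_0$ contribution to get $\sum_{\alpha,k}G_{\alpha,k}^2=\frac{d^2-1}{d}\mathbb{I}_d$, and then uses the reconstruction formula (\ref{Gak}) to translate this into a statement about $\sum_{\alpha,k}P_{\alpha,k}^2/[a_\alpha^2(b_\alpha-c_\alpha)]$ valid for \emph{any} GEAM, which collapses to (\ref{P2}) once $a_\alpha^2(b_\alpha-c_\alpha)\equiv S$. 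Your argument is shorter and more structural — it uses only the 2-design property (\ref{con}) and would apply verbatim to any conical 2-design, not just those arising from GEAMs — whereas the paper's route yields as a by-product the more general identity for GEAMs that are not conical 2-designs. Both are valid; no gap.
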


\begin{proof}
For any orthonormal operator basis $\{G_j;\,j=0,\ldots,d^2-1\}$, one has \cite{Werner_basis,Ohno}
\begin{equation}
\sum_{j=0}^{d^2-1}G_j^2=d\mathbb{I}_d.
\end{equation}
Taking the Hermitian orthonormal basis from eq. (\ref{Ga}), we find that
\begin{equation}\label{G2}
\sum_{\alpha=1}^N\sum_{k=1}^{M_\alpha-1}G_{\alpha,k}^2=\frac{d^2-1}{d}\mathbb{I}_d.
\end{equation}
Now, we use the relation between $G_{\alpha,k}$ and $P_{\alpha,k}$ from eq. (\ref{Gak}) to reformulate this property in terms of the measurement operators;
\begin{equation}
\sum_{\alpha=1}^N\sum_{k=1}^{M_\alpha}\frac{P_{\alpha,k}^2}{a_\alpha^2(b_\alpha-c_\alpha)}
=\left(\frac{d^2-1}{d}+\sum_{\alpha=1}^N\frac{\gamma_\alpha^2}{M_\alpha a_\alpha^2(b_\alpha-c_\alpha)}\right)
\mathbb{I}_d.
\end{equation}
This is a general property of any GEAM. Now, if $P_{\alpha,k}$ form a conical 2-design, then $a_\alpha^2(b_\alpha-c_\alpha)=S$, and the formula can be further simplified to
\begin{equation}
\sum_{\alpha=1}^N\sum_{k=1}^{M_\alpha}P_{\alpha,k}^2=[\mathcal{C}_{\max}+(d-1)S]\mathbb{I}_d,
\end{equation}
where
\begin{equation}
\mu=\sum_{\alpha=1}^N\frac{\gamma_\alpha^2}{M_\alpha},\qquad \mathcal{C}_{\max}=\frac{d-1}{d}S+\mu,
\end{equation}
with $\mathcal{C}_{\max}$ being the maximal index of coincidence.
\end{proof}

An immediate implication from eq. (\ref{P2}) is that $P_{\alpha,k}^2$ form a POVM after a proper rescaling. However, we know nothing about linear independency nor symmetry properties of such a measurement.

\begin{Theorem}\label{th1}
The Brukner-Zeilinger invariants for a conical 2-design GEAM read
\begin{enumerate}
\item the total variance
\begin{equation}
\mathcal{V}(\rho)=S[d-\Tr(\rho^2)],
\end{equation}
\item the invariant information
\begin{equation}
\mathcal{I}(\rho)=S\left[\Tr(\rho^2)-\frac 1d\right],
\end{equation}
\item the invariant uncertainty
\begin{equation}
\mathcal{U}(\rho)=S[1-\Tr(\rho^2)].
\end{equation}
\end{enumerate}
\end{Theorem}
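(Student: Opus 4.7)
My plan is to reduce all three claims to the computation of the total variance $\mathcal{V}(\rho)$, and then obtain the invariant information and uncertainty as direct consequences via the decomposition~(\ref{IU}).

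For part~1, I would start from eq.~(\ref{VV}) and invoke Lemma~\ref{suma} to replace $\sum_{\alpha,k}P_{\alpha,k}^2$ by the scalar multiple of the identity $[\mathcal{C}_{\max}+(d-1)S]\mathbb{I}_d$, so that the trace against $\rho$ collapses to the pure constant $\mathcal{C}_{\max}+(d-1)S$. Subtracting the index of coincidence via eq.~(\ref{IOCN}) then leaves the expression $\mathcal{C}_{\max}+(d-1)S - S\Tr(\rho^2) + S/d - \mu$. The crucial simplification is the identity $\mathcal{C}_{\max}-\mu=(d-1)S/d$ from eq.~(\ref{kappas2}), which cancels the $\mu$-contributions and telescopes the remaining $S$-terms to $Sd$, producing $\mathcal{V}(\rho)=S[d-\Tr(\rho^2)]$.

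For parts~2 and~3, rather than expanding $\mathcal{I}$ and $\mathcal{U}$ from their original Brukner--Zeilinger definitions, I would invoke the variance decomposition~(\ref{IU}). Evaluating the just-proved formula for $\mathcal{V}$ on the maximally mixed state $\rho_\ast$ (where $\Tr\rho_\ast^2=1/d$) and on a rank-1 projector (where the purity equals $1$) pins down the extremal values $S(d^2-1)/d$ and $S(d-1)$, and the desired expressions $\mathcal{I}(\rho)=S[\Tr(\rho^2)-1/d]$ and $\mathcal{U}(\rho)=S[1-\Tr(\rho^2)]$ then follow by a single subtraction each.

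There is no genuine obstacle in this argument; once Lemma~\ref{suma} and eq.~(\ref{IOCN}) are in hand, the whole derivation is a short algebraic manipulation. The only place where a slip is easy is the final collapse of the constants, so I would be careful to use $\mathcal{C}_{\max}=(d-1)S/d+\mu$ when eliminating $\mathcal{C}_{\max}$ so that the $\mu$-terms cancel cleanly; everything else is bookkeeping.
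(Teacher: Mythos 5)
Your proposal is correct and follows essentially the same route as the paper: apply Lemma~\ref{suma} inside eq.~(\ref{VV}) to get $\mathcal{V}(\rho)=\mathcal{C}_{\max}+(d-1)S-\mathcal{C}(\rho)=S[d-\Tr(\rho^2)]$, then read off the extremal values $\mathcal{V}_{\max}=S(d^2-1)/d$ and $\mathcal{V}_{\min}=S(d-1)$ and obtain $\mathcal{I}$ and $\mathcal{U}$ from eq.~(\ref{IU}). The only cosmetic point is that the cancellation identity you use is $\mathcal{C}_{\max}=\frac{d-1}{d}S+\mu$ (stated just after eq.~(\ref{IOCN})) rather than eq.~(\ref{kappas2}), but the algebra is the same.
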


\begin{proof}
Applying the results of Lemma \ref{suma} in eq. (\ref{VV}), we arrive at
\begin{equation}
\mathcal{V}(\rho)=\mathcal{C}_{\max}+(d-1)S-\mathcal{C}(\rho)
=S[d-\Tr(\rho^2)].
\end{equation}
It is straightforward to recover the upper and lower bounds of $\mathcal{V}(\rho)$,
\begin{equation}
\mathcal{V}_{\max}=\frac{d^2-1}{d}S,\qquad \mathcal{V}_{\min}=(d-1)S.
\end{equation}
Thus, $\mathcal{I}(\rho)$ and $\mathcal{U}(\rho)$ follow directly from eq. (\ref{IU}).
\end{proof}

Observe that the Brukner-Zeilinger invariants for GEAMs in Theorem \ref{th1} can be easily rescaled to equal their definitions from eqs. (\ref{I}--\ref{V}), which are valid for complete orthogonal sets of observables. Therefore, overcompleteness of GEAMs results only in multiplication by a constant factor. Our results reproduce the Brukner-Zeilinger invariants for MUBs \cite{BZI5}, MUMs \cite{BZI5,BZI6}, SIC POVMs \cite{BZI5}, general SIC POVMs \cite{BZI5,BZI6}, and $(N,M)$-POVMs \cite{SICMUB_BZ}.

\section{Coherence via skew information}

In quantum information, quantum coherence is an important resource with applications in quantum thermodynamics \cite{Gour_thermo,Michal_thermo,Lostaglio_thermo}, quantum networks \cite{Bibak,Mintert}, and quantum computing \cite{LiuFan}. Among a variety of coherence measures that includes e.g. relative entropy \cite{coh_entropy}, fidelity \cite{coh_fidelity}, trace distance \cite{Rana}, and $\ell_1$-norm \cite{l1norm}, it is skew information \cite{Girolami} that reveals a relation between quantum states and observables. The Wigner-Yanase skew information  $\mathcal{J}(\rho,A)=-\frac 12 \Tr([\rho^{1/2},A]^2)$ \cite{WY} for a quantum state $\rho$ is measured with respect to an observable $A$. It was later generalized to the Wigner-Yanase-Dyson skew information \cite{WYD,WYD2}
\begin{equation}\label{Jmu}
\mathcal{J}_\mu(\rho,A)=-\frac 12 \Tr([\rho^\mu,A][\rho^{1-\mu},A]),
\end{equation}
which is characterized via an additional parameter $0\leq\mu\leq 1$. Interestingly, there exists an intristic relation between $\mathcal{J}_\mu(\rho,A)$ and strong subadditivity of the von Neumann entropy \cite{LiebRuskai}. Finally, adding another free parameter results in the generalized Wigner-Yanase-Dyson skew information \cite{GWYD}
\begin{equation}\label{Jmunu}
\mathcal{J}_{\mu\nu}(\rho,A)=-\frac 12 \Tr([\rho^\mu,A][\rho^\nu,A]
\rho^{1-\mu-\nu}),
\end{equation}
where $\mu,\nu\geq 0$ and $\mu+\nu\leq 1$. This allows for greater flexibility for a broad range of states and measurement settings. Note that $\mathcal{J}_{\mu\nu}(\rho,A)$ is convex provided that $\mu+\nu=1$ \cite{WYD} or $\mu+\nu<1$, $2\mu+\nu\leq 1$, $\mu+2\nu\leq 1$ \cite{CaiLuo}.

Using the skew information, one introduces a coherence quantification based on quantum uncertainty. For the generalized Wigner-Yanase-Dyson skew information \cite{QUR2}, the quantum uncertainty is given by
\begin{equation}
\mathcal{Q}_{\mu\nu}(\rho,\mathcal{G})=\sum_{j=0}^{d^2-1}\mathcal{J}_{\mu\nu}(\rho,G_j),
\end{equation}
where $\mathcal{G}=\{G_j;\,j=1,\ldots,d^2\}$ is an orthonormal operator basis. Importantly, it is a convex function -- and therefore a well-defined coherence measure -- if and only if (i) $\mu+\nu=1$, or (ii) $\mu+\nu<1$, $2\mu+\nu\leq 1$, $\mu+2\nu\leq 1$. For $\mu+\nu=1$, one recovers the quantum uncertainty for the Wigner-Yanase-Dyson skew information \cite{Averaged_WYD},
\begin{equation}
\mathcal{Q}_\mu(\rho,\mathcal{G})=\sum_{j=0}^{d^2-1}\mathcal{J}_\mu(\rho,G_j),
\end{equation}
which has been shown to be independent on the choice of $\mathcal{G}$,
\begin{equation}\label{QG}
\mathcal{Q}_\mu(\rho,\mathcal{G})\equiv \mathcal{Q}_\mu(\rho)
=d-\Tr(\rho^\mu)\Tr(\rho^{1-\mu}).
\end{equation}
Analogically, for its generalization,
\begin{equation}\label{QG2}
\mathcal{Q}_{\mu\nu}(\rho,\mathcal{G})\equiv \mathcal{Q}_{\mu\nu}(\rho)
=\mathcal{Q}_\mu(\rho)+\mathcal{Q}_\nu(\rho)-\mathcal{Q}_{\mu+\nu}(\rho).
\end{equation}
However, the dependence on the mixed state $\rho$ persists. For the discussions on trade-offs between the coherence and mixedness of $\rho$, see refs. \cite{coh_mix,coh_mix2}.

Alternatively, the quantum uncertainty can be calculated for informationally (over)complete POVMs \cite{QUR}.

\begin{Theorem}
For a conical 2-design GEAM $\mathcal{P}$, the quantum coherence measures based on the Wigner-Yanase-Dyson skew information and its generalization read
\begin{equation}\label{QP}
\mathcal{Q}_\mu(\rho,\mathcal{P})=S \mathcal{Q}_\mu(\rho),
\end{equation}
\begin{equation}\label{Qmn}
\mathcal{Q}_{\mu\nu}(\rho,\mathcal{P})=\frac S2 [d+\Tr(\rho^{\mu+\nu}\rho^{1-\mu-\nu})-\Tr(\rho^\mu\rho^{1-\mu})
-\Tr(\rho^\nu\rho^{1-\nu})],
\end{equation}
respectively, where the entire dependence on $\mathcal{P}$ is through the measurement parameter $S$.
\end{Theorem}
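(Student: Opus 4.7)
The plan is to reduce both skew-information quantities to POVM sums of just two basic traces, $\Tr(\rho A^2)$ and $\Tr(\rho^a A \rho^{1-a} A)$, each of which becomes computable once summed over $\mathcal{P}$. Expanding the commutators in (\ref{Jmu}) and (\ref{Jmunu}) and using trace cyclicity gives
\begin{equation}
\mathcal{J}_\mu(\rho, A) = \Tr(\rho A^2) - \Tr(\rho^\mu A \rho^{1-\mu} A),
\end{equation}
\begin{equation}
\mathcal{J}_{\mu\nu}(\rho, A) = \tfrac{1}{2}\bigl[\Tr(\rho A^2) + \Tr(\rho^{\mu+\nu} A \rho^{1-\mu-\nu} A) - \Tr(\rho^\mu A \rho^{1-\mu} A) - \Tr(\rho^\nu A \rho^{1-\nu} A)\bigr],
\end{equation}
where the second expansion also uses the symmetry $\Tr(\rho^a A \rho^b A) = \Tr(\rho^b A \rho^a A)$, which is immediate in the eigenbasis of $\rho$ since $|A_{ij}|^2 = |A_{ji}|^2$.

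The first POVM sum is handled directly by Lemma \ref{suma}: $\sum_{\alpha,k}\Tr(\rho P_{\alpha,k}^2) = \mathcal{C}_{\max} + (d-1)S$. For the second type, I would invoke the flip-operator identity $\Tr(XPYP) = \Tr[(X \otimes Y)(P \otimes P)\mathbb{F}_d]$ to pull out $\sum_{\alpha,k} P_{\alpha,k}\otimes P_{\alpha,k}$, apply the conical 2-design condition (\ref{con}), and then simplify using $\mathbb{F}_d^2 = \mathbb{I}_d \otimes \mathbb{I}_d$ and $\Tr[(X \otimes Y)\mathbb{F}_d] = \Tr(XY)$. Together with $\kappa_+ = \mathcal{C}_{\max} - S$ and $\kappa_- = S$, this yields
\begin{equation}
\sum_{\alpha,k}\Tr(\rho^a P_{\alpha,k}\rho^{1-a}P_{\alpha,k}) = (\mathcal{C}_{\max}-S) + S\,\Tr(\rho^a)\Tr(\rho^{1-a})
\end{equation}
for each $a \in \{\mu, \nu, \mu+\nu\}$.

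Substituting into $\mathcal{Q}_\mu(\rho,\mathcal{P})$, the $\mathcal{C}_{\max}$ contributions cancel and what is left collapses to $S[d - \Tr(\rho^\mu)\Tr(\rho^{1-\mu})] = S\,\mathcal{Q}_\mu(\rho)$, establishing (\ref{QP}). Feeding the same sums into the four-trace expression for $\mathcal{Q}_{\mu\nu}(\rho,\mathcal{P})$, the $\mathcal{C}_{\max}$ terms again cancel (coefficients $1+1-1-1=0$) and the surviving $S$-terms reorganize into (\ref{Qmn}). The main bookkeeping step I expect to be slightly delicate is the expansion of $\mathcal{J}_{\mu\nu}(\rho,A)$ into four traces, which requires careful cycling of the $\rho^{1-\mu-\nu}$ factor and repeated use of the symmetry identity above; once in hand, the rest is substitution, with the constraint $\mu + \nu \leq 1$ ensuring that $\rho^{\mu+\nu}$ and $\rho^{1-\mu-\nu}$ remain in their valid domains.
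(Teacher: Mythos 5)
Your proposal is correct, and it reaches the two formulas by a genuinely more direct route than the paper at the one step that matters. Both you and the paper expand $\mathcal{J}_\mu$ and $\mathcal{J}_{\mu\nu}$ into the same two- and four-trace forms, use Lemma \ref{suma} for $\sum_{\alpha,k}\Tr(\rho P_{\alpha,k}^2)$, and reduce $\mathcal{Q}_{\mu\nu}$ to the combination $\tfrac12[\mathcal{Q}_\mu+\mathcal{Q}_\nu-\mathcal{Q}_{\mu+\nu}]$. The difference is in how the key sum $\sum_{\alpha,k}\Tr(\rho^a P_{\alpha,k}\rho^{1-a}P_{\alpha,k})$ is evaluated. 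The paper obtains it \emph{indirectly}: it rewrites each $G_{\alpha,k}$ in terms of the measurement operators via eq.~(\ref{Gak}), sums $\mathcal{J}_\mu(\rho,G_{\alpha,k})$ over the whole basis, and then compares against the externally known basis-independent value $\mathcal{Q}_\mu(\rho,\mathcal{G})=d-\Tr(\rho^\mu)\Tr(\rho^{1-\mu})$ from eq.~(\ref{QG}) to solve for the unknown sum. You instead compute it \emph{directly} from the 2-design condition via the swap-trick $\Tr(XPYP)=\Tr[(X\otimes Y)(P\otimes P)\mathbb{F}_d]$, which, after inserting eq.~(\ref{con}) and using $\Tr[(X\otimes Y)\mathbb{F}_d]=\Tr(XY)$, gives $\kappa_+ + \kappa_-\Tr(\rho^a)\Tr(\rho^{1-a})=(\mathcal{C}_{\max}-S)+S\Tr(\rho^a)\Tr(\rho^{1-a})$, in exact agreement with the paper's eq.~(\ref{suma2}). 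Your route is self-contained (it does not import the result of eq.~(\ref{QG}) or the explicit $G\mhyphen P$ dictionary) and in fact the same trick with $Y=\mathbb{I}_d$ reproves Lemma \ref{suma} itself. One step you flag but do not fully execute is the reduction of the cross term $\Tr(\rho^\mu A\rho^\nu A\rho^{1-\mu-\nu})$ arising from the commutator expansion of $\mathcal{J}_{\mu\nu}$; it needs the slightly more general identity $\Tr(\rho^a A\rho^b A\rho^c)=\Tr(\rho^{a+c}A\rho^b A)$, which follows from the same eigenbasis computation you give, so this is a matter of writing it out rather than a gap. (Incidentally, your derivation makes it clear that the trace products in eq.~(\ref{Qmn}) should be read as $\Tr(\rho^{\mu+\nu})\Tr(\rho^{1-\mu-\nu})$ etc., consistent with eqs.~(\ref{QG})--(\ref{QG2}).)
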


\begin{proof}
From the definition of $\mathcal{J}_\mu(\rho,A)$ in eq. (\ref{Jmu}), it follows that
\begin{equation}
\mathcal{J}_\mu(\rho,A)=\Tr(\rho A^2)-\Tr(\rho^\mu A \rho^{1-\mu}A).
\end{equation}
Therefore, the quantum uncertainty for GEAMs
\begin{equation}
\mathcal{Q}_\mu(\rho,\mathcal{P})=\sum_{\alpha=1}^N\sum_{k=1}^{M_\alpha-1}
\mathcal{J}_\mu(\rho,P_{\alpha,k})
\end{equation}
simplifies to
\begin{equation}\label{QPmu}
\mathcal{Q}_\mu(\rho,\mathcal{P})=\Tr\left(\rho \sum_{\alpha=1}^N\sum_{k=1}^{M_\alpha}P_{\alpha,k}^2\right)
-\sum_{\alpha=1}^N\sum_{k=1}^{M_\alpha}\Tr(\rho^\mu P_{\alpha,k} \rho^{1-\mu}P_{\alpha,k})
=\mathcal{C}_{\max}+(d-1)S-\sum_{\alpha=1}^N\sum_{k=1}^{M_\alpha}\Tr(\rho^\mu P_{\alpha,k} \rho^{1-\mu}P_{\alpha,k}),
\end{equation}
where we used eq. (\ref{suma}) for the sum under the trace. Now, to find the formula for the second term, we need to calculate $\mathcal{Q}_\mu(\rho,\mathcal{G})$ for $\mathcal{G}$ defined by eqs. (\ref{Ga}) and (\ref{Gak}).

Let us start with $\mathcal{J}_\mu(\rho,\mathcal{G})$.
The first basis operator, $G_0=\mathbb{I}_d/\sqrt{d}$, and therefore $\mathcal{J}_\mu(\rho,G_0)=0$. For the remaining basis operators, one finds
\begin{equation*}
\begin{split}
\mathcal{J}_\mu(\rho,&G_{\alpha,k})=\frac{1}{S}
\Big[\Tr(\rho P_{\alpha,k}^2)-\Tr(\rho^\mu P_{\alpha,k}\rho^{1-\mu}P_{\alpha,k})\Big]
+\frac{1}{S(1+\sqrt{M_\alpha})^2}
\Big[\Tr(\rho P_{\alpha,M_\alpha}^2)-\Tr(\rho^\mu P_{\alpha,M_\alpha}\rho^{1-\mu}P_{\alpha,M_\alpha})\Big]\\&
-\frac{1}{SM_\alpha(1+\sqrt{M_\alpha})^2}
\Big[\Tr(\rho P_{\alpha,M_\alpha}P_{\alpha,k})+\Tr(\rho P_{\alpha,k}P_{\alpha,M_\alpha})-\Tr(\rho^\mu P_{\alpha,M_\alpha}\rho^{1-\mu}P_{\alpha,k})-\Tr(\rho^\mu P_{\alpha,k}\rho^{1-\mu}P_{\alpha,M_\alpha})\Big].
\end{split}
\end{equation*}
Now, taking a sum over all elements of $\mathcal{G}$, we obtain
\begin{equation}
\mathcal{Q}_\mu(\rho,\mathcal{G})=\sum_{\alpha=1}^N\sum_{k=1}^{M_\alpha-1}
\mathcal{J}_\mu(\rho,G_{\alpha,k})
=\frac 1S \left[\mathcal{C}_{\max}+(d-1)S
-\sum_{\alpha=1}^N\sum_{k=1}^{M_\alpha}\Tr(\rho^\mu P_{\alpha,k}\rho^{1-\mu}P_{\alpha,k})\right],
\end{equation}
where we used the results of Lemma \ref{suma} and the fact that $\sum_{k=1}^{M_\alpha}P_{\alpha,k}=\gamma_\alpha\mathbb{I}_d$. Finally, if we compare our results with eq. (\ref{QG}), it becomes clear that
\begin{equation}\label{suma2}
\sum_{\alpha=1}^N\sum_{k=1}^{M_\alpha}\Tr(\rho^\mu P_{\alpha,k}\rho^{1-\mu}P_{\alpha,k})=\mathcal{C}_{\max}+S\Big[\Tr(\rho^\mu)\Tr(\rho^{1-\mu})
-1\Big]=\mathcal{C}_{\max}+S\Big[d-1-\mathcal{Q}_\mu(\rho)\Big].
\end{equation}
Finally, going back to eq. (\ref{QPmu}), we input input the above formula and recover eq. (\ref{QP}).

As for proving the second formula, we once again start from the definition. Observe that $\mathcal{J}_{\mu\nu}(\rho,A)$ in eq. (\ref{Jmunu}) can be rewritten into
\begin{equation}\label{Imn}
\begin{split}
\mathcal{J}_{\mu\nu}(\rho,A)&=\frac 12 [\Tr(\rho A^2)+\Tr(\rho^{\mu+\nu}A\rho^{1-\mu-\nu}A)
-\Tr(\rho^\mu A\rho^{1-\mu}A)-\Tr(\rho^\nu A\rho^{1-\nu}A)]
\\&=\frac 12 [\mathcal{J}_{\mu}(\rho,A)+\mathcal{J}_{\nu}(\rho,A)-\mathcal{J}_{\mu+\nu}(\rho,A)].
\end{split}
\end{equation}
Hence, the quantum uncertainty for the generalized equiangular measurements reads
\begin{equation}
\mathcal{Q}_{\mu\nu}(\rho,\mathcal{P})=\frac 12 [\mathcal{Q}_{\mu}(\rho,\mathcal{P})+Q_{\nu}(\rho,\mathcal{P})-\mathcal{Q}_{\mu+\nu}(\rho,\mathcal{P})]
=\frac S2 [\mathcal{Q}_{\mu}(\rho)+\mathcal{Q}_{\nu}(\rho)-\mathcal{Q}_{\mu+\nu}(\rho)]
=\mathcal{Q}_{\mu\nu}(\rho).
\end{equation}
\end{proof}

Note that our results generalize the special cases of quantum uncertainties for MUBs, MUMs, SIC POVMs, general SIC POVMs, and $(N,M)$-POVMs \cite{SICMUB_App3,QUR2,GWYD_SICMUB}.

Aside from the aforementioned coherence measures, one considers the maximal coherence \cite{coh_WYD}
\begin{equation}
\mathcal{Q}_{\mu\nu}^{\max}(\rho)=\max_\Pi \mathcal{J}_{\mu\nu}(\rho,\Pi),
\end{equation}
where $\Pi$ is the von Neumann measurement, $\mu,\nu\geq 0$ and $\mu+\nu\leq 1$.
If $\mu+\nu<1$, then this equality holds if and only if $2\mu+\nu\leq 1$ and $\mu+2\nu\leq 1$. The explicit formula after maximization reads \cite{GWYD_SICMUB}
\begin{equation}
\mathcal{Q}_{\mu\nu}^{\max}(\rho)=\frac{1}{2d}\mathcal{Q}_{\mu\nu}(\rho).
\end{equation}
Now, observe that $\mathcal{Q}_{\mu\nu}^{\max}(\rho)=q_1\mathcal{Q}_{\mu\nu}(\rho,\mathcal{P}_1)+q_2\mathcal{Q}_{\mu\nu}(\rho,\mathcal{P}_2)$ for the conical 2-design GEAMs $\mathcal{P}_1$ and $\mathcal{P}_2$ characterized via $S_1$ and $S_2$, respectively, with $q_1S_1+q_2S_2=1/d$. One example was given in ref. \cite{GWYD_SICMUB}, where $\mathcal{P}_1$ is a SIC POVM with $S_1=\frac{1}{d(d+1)}$, $\mathcal{P}_2$ is a maximal set of $N=d+1$ MUBs with $\gamma_\alpha=1/N$, and $q_1=1$, $q_2=d+1$ are the numbers of respective POVMs, respectively.

\section{Schmidt number criterion}

In the theory of quantum entanglement, the Schmidt number is an entanglement quantifier for bipartite states \cite{Terhal}. It introduces a hierarchy of entangled states, changing from $1$ (separable states) up to $d$ \cite{Sperling,TOPICAL}. The Schmidt number criteria have been obtained using e.g. fidelity \cite{Terhal}, cross norm \cite{Johnston4}, Bloch decomposition \cite{Klockl}, covariance matrix \cite{Liu4,Liu5}, and witnesses \cite{Hulpke,Sanpera,Wyderka,Shi}.

In this section, we use the correlation matrix constructed from conical 2-design generalized equiangular measurements to derive a Schmidt number criterion. Following the method in ref. \cite{lower_conc}, we take a set $\omega_{\alpha,k}$ of orthonormal vectors in $\mathbb{C}^{d^2+N-1}$ and define the correlation operator,
\begin{equation}\label{bro}
\mathcal{B}(\rho)=\sum_{\alpha,\beta=1}^N\sum_{k=1}^{M_\alpha}\sum_{\ell=1}^{M_\beta}
\Tr[\rho(P_{\alpha,k}\otimes P_{\beta,\ell})]|\omega_{\alpha,k}\>\<\omega_{\beta,\ell}|.
\end{equation}
The elements of $\mathcal{B}(\rho)$ in the basis of $\omega_{\alpha,k}$ form the correlation matrix $\mathcal{B}_{\alpha,k;\beta,\ell}=\Tr[\rho(P_{\alpha,k}\otimes P_{\beta,\ell})]$. The entanglement quantifier is derived with respect to the trace norm $\|\mathcal{B}(\rho)\|_{\tr}=\Tr\sqrt{\mathcal{B}(\rho)^\dagger \mathcal{B}(\rho)}$ of the correlation operator $\mathcal{B}(\rho)$. First, however, we need the following lemma.

\begin{Lemma}\label{Lemma1}
For any pure bipartite state $|\psi\>$, eq. (\ref{bro}) produces
\begin{equation}\label{Bpsi}
\|\mathcal{B}(|\psi\>\<\psi|)\|_{\tr}=\mathcal{C}_{\max}+2S\sum_{j<k}\lambda_j\lambda_k,
\end{equation}
where $\lambda_j$ such that $\lambda_j\geq 0$ and $\sum_{j=1}^r\lambda_j^2=1$ are defined via the Schmidt decomposition
\begin{equation}\label{psi}
|\psi\>=\sum_{j=1}^r\lambda_j|e_j\>\otimes|f_j\>,
\end{equation}
with the Schmidt number $r$ and two orthonormal bases $\{e_j\}$, $\{f_j\}$ in $\mathbb{C}^d$.
\end{Lemma}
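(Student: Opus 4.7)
The plan is to factor $\mathcal{B}$ as a rectangular matrix product built from Schmidt data, extract the relevant Gram operators via the conical $2$-design identity of eq.~(\ref{con}), and then read off the trace norm from a manifestly positive operator.

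First, expand $|\psi\>\<\psi|=\sum_{j,m=1}^{r}\lambda_{j}\lambda_{m}|e_{j}\>\<e_{m}|\otimes|f_{j}\>\<f_{m}|$ and substitute into eq.~(\ref{bro}) to obtain
\[
\mathcal{B}_{(\alpha,k)(\beta,\ell)}=\sum_{j,m=1}^{r}\lambda_{j}\lambda_{m}\<e_{m}|P_{\alpha,k}|e_{j}\>\<f_{m}|P_{\beta,\ell}|f_{j}\>.
\]
Define two $d^{2}\times(d^{2}+N-1)$ matrices $V,W$ with row label $(j,m)$ and column label indexing the POVM outcomes by
\[
V_{(j,m),(\alpha,k)}=\sqrt{\lambda_{j}\lambda_{m}}\<e_{m}|P_{\alpha,k}|e_{j}\>,\qquad W_{(j,m),(\beta,\ell)}=\sqrt{\lambda_{j}\lambda_{m}}\<f_{m}|P_{\beta,\ell}|f_{j}\>.
\]
Then $\mathcal{B}=V^{T}W$, and consequently $\mathcal{B}^{\dagger}\mathcal{B}=W^{\dagger}(V^{*}V^{T})W$.

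The core step is the evaluation of $V^{*}V^{T}$ and $WW^{\dagger}$ via eq.~(\ref{con}), which in any orthonormal basis reads $\sum_{\alpha,k}(P_{\alpha,k})_{ac}(P_{\alpha,k})_{bd}=\kappa_{+}\delta_{ac}\delta_{bd}+\kappa_{-}\delta_{ad}\delta_{bc}$. Using Hermiticity $\overline{(P_{\alpha,k})_{mj}}=(P_{\alpha,k})_{jm}$, the entry $(V^{*}V^{T})_{(j,m)(j',m')}$ simplifies to $\sqrt{\lambda_{j}\lambda_{m}\lambda_{j'}\lambda_{m'}}\,[\kappa_{+}\delta_{jm}\delta_{m'j'}+\kappa_{-}\delta_{jj'}\delta_{mm'}]$, which in operator form is
\[
V^{*}V^{T}=WW^{\dagger}=\kappa_{+}|\lambda\>\<\lambda|+\kappa_{-}\,\Lambda\otimes\Lambda\;\equiv\;M,
\]
with $|\lambda\>=\sum_{j=1}^{r}\lambda_{j}|j,j\>\in\mathbb{C}^{d^{2}}$ and $\Lambda=\mathrm{diag}(\lambda_{1},\ldots,\lambda_{r},0,\ldots,0)$. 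The identical answer for $WW^{\dagger}$ relies on the fact that the $2$-design identity is basis-free, so replacing $\{e_{j}\}$ by $\{f_{j}\}$ leaves the bilinear sum unchanged.

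Since $\kappa_{\pm}>0$ and both $|\lambda\>\<\lambda|$ and $\Lambda\otimes\Lambda$ are positive semidefinite, $M\succeq 0$. Inserting $M=M^{1/2}\cdot M^{1/2}$ and applying $\mathrm{spec}(AB)=\mathrm{spec}(BA)$ up to zero eigenvalues, we get $\mathrm{spec}(W^{\dagger}MW)=\mathrm{spec}(MWW^{\dagger})=\mathrm{spec}(M^{2})$, so the squared singular values of $\mathcal{B}$ are the squared eigenvalues of $M$; positivity of $M$ then identifies the singular values of $\mathcal{B}$ with the eigenvalues of $M$ themselves. Therefore
\[
\|\mathcal{B}(|\psi\>\<\psi|)\|_{\tr}=\Tr(M)=\kappa_{+}\<\lambda|\lambda\>+\kappa_{-}(\Tr\Lambda)^{2}=\kappa_{+}+\kappa_{-}\Big(\sum_{j=1}^{r}\lambda_{j}\Big)^{2},
\]
and substituting $\kappa_{+}=\mathcal{C}_{\max}-S$, $\kappa_{-}=S$ together with $(\sum_{j}\lambda_{j})^{2}=1+2\sum_{j<k}\lambda_{j}\lambda_{k}$ yields eq.~(\ref{Bpsi}). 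The delicate point is the verification that the two structurally different products $V^{*}V^{T}$ and $WW^{\dagger}$ --- built from the same POVM in two a priori unrelated Schmidt bases $\{e_j\}$ and $\{f_j\}$ --- collapse to the same real PSD operator $M$; this coincidence, forced by the $2$-design structure, is precisely what makes $\|\mathcal{B}\|_{\tr}$ a function of the Schmidt coefficients alone.
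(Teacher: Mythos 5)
Your proof is correct, and at its core it runs on the same engine as the paper's: the evaluation, via the conical 2-design identity (\ref{con}), of the Gram-type sums $\sum_{\alpha,k}\overline{\<e_m|P_{\alpha,k}|e_j\>}\<e_{m'}|P_{\alpha,k}|e_{j'}\>=\kappa_+\delta_{jm}\delta_{j'm'}+\kappa_-\delta_{jj'}\delta_{mm'}$, followed by $\kappa_++\kappa_-=\mathcal{C}_{\max}$, $\kappa_-=S$. Where you differ is in how the trace norm is extracted. The paper introduces the vectors $|u_{ij}\>$ and $|v_{ij}\>$, asserts the existence of a unitary $U$ with $|\overline{v}_{ij}\>=U|u_{ij}\>$, and then uses unitary invariance plus positivity of $\sum_{i,j}\lambda_i\lambda_j|u_{ij}\>\<u_{ij}|$ to reduce the trace norm to a trace. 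That unitary exists precisely because the two families of vectors have identical Gram matrices --- a fact the paper leaves implicit. Your factorization $\mathcal{B}=V^TW$ together with $V^*V^T=WW^\dagger=M$ and the $\mathrm{spec}(AB)=\mathrm{spec}(BA)$ argument makes exactly this point explicit and verifies it by direct computation of both Gram operators, so your version is somewhat more self-contained; the paper's version is shorter because it outsources that step to a standard linear-algebra fact. Both correctly land on $\|\mathcal{B}\|_{\tr}=\kappa_++\kappa_-\bigl(\sum_j\lambda_j\bigr)^2$.
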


\begin{proof}
First, we use GEAMs to define two families of vectors:
\begin{equation}
|u_{ij}\>=\sum_{\alpha=1}^N\sum_{k=1}^{M_\alpha}\<e_j|P_{\alpha,k}|e_i\>
|\omega_{\alpha,k}\>,
\end{equation}
\begin{equation}
|v_{ij}\>=\sum_{\alpha=1}^N\sum_{k=1}^{M_\alpha}\<f_j|P_{\alpha,k}|f_i\>
|\omega_{\alpha,k}\>,
\end{equation}
where $i,j=1,\ldots,d$. Now, for the vector $|\psi\>$ in eq. (\ref{psi}), we calculate
\begin{equation}
\mathcal{B}(|\psi\>\<\psi|)=\sum_{i,j=1}^r\lambda_i\lambda_j|u_{ij}\>
\<\overline{v}_{ij}|,
\end{equation}
where $\overline{x}$ is a complex conjugation of $x$. Recall that there always exists a unitary operator $U$ such that $|\overline{v}_{ij}\>=U|u_{ij}\>$. Therefore, from the unitary invariance of the trace norm $\|X\|_{\tr}=\|UXV\|_{\tr}$ for any unitary $U$ and $V$, we find that
\begin{equation}\label{BB}
\|\mathcal{B}(|\psi\>\<\psi|)\|_{\tr}=\left\|\sum_{i,j=1}^r\lambda_i\lambda_j|u_{ij}\>
\<u_{ij}|\right\|_{\tr}=\sum_{i,j=1}^r\lambda_i\lambda_j\<u_{ij}|u_{ij}\>.
\end{equation}
Next, the orthonormality of $|\omega_{\alpha,k}\>$ and the property of conical 2-designs in eq. (\ref{con}) allow us to compute
\begin{equation}
\begin{split}
\<u_{ij}|u_{ij}\>&=\sum_{\alpha,\beta=1}^N\sum_{k=1}^{M_\alpha}\sum_{\ell=1}^{M_\beta}
\<\omega_{\beta,\ell}|\overline{\<e_j|P_{\beta,\ell}|e_i\>}\<e_j|P_{\alpha,k}|e_i\>
|\omega_{\alpha,k}\>
\\&=\sum_{\alpha=1}^N\sum_{k=1}^{M_\alpha}\<e_i|P_{\alpha,k}|e_j\>\<e_j|P_{\alpha,k}|e_i\>
=\<e_i\otimes e_j|\sum_{\alpha=1}^N\sum_{k=1}^{M_\alpha}P_{\alpha,k}\otimes P_{\alpha,k}|e_j\otimes e_i\>
\\&=\<e_i\otimes e_j|(\kappa_+\mathbb{I}_d\otimes\mathbb{I}_d
+\kappa_-\mathbb{F}_d)|e_j\otimes e_i\>
=\kappa_+\delta_{ij}+\kappa_-.
\end{split}
\end{equation}
Hence, eq. (\ref{BB}) simplifies to
\begin{equation}\label{BB2}
\|\mathcal{B}(|\psi\>\<\psi|)\|_{\tr}=\sum_{i,j=1}^r\lambda_i\lambda_j
(\kappa_+\delta_{ij}+\kappa_-)=\kappa_+\sum_{i=1}^r\lambda_i^2
+\kappa_-\sum_{i,j=1}^r\lambda_i\lambda_j
=\kappa_++\kappa_-+2\kappa_-\sum_{i<j}\lambda_i\lambda_j,
\end{equation}
which is equivalent to eq. (\ref{Bpsi}).
\end{proof}

Now, it is straightforward to establish the Schmidt number criterion for a bipartite mixed state $\rho$.

\begin{Theorem}
If the Schmidt number of a bipartite state $\rho$ is at most $r$, then
\begin{equation}
\|\mathcal{B}(\rho)\|_{\tr}\leq \mathcal{C}_{\max}+(r-1)S.
\end{equation}
\end{Theorem}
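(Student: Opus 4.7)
The plan is to reduce to the pure-state case handled by Lemma \ref{Lemma1} and then extend to mixed states by convexity. I would proceed in three steps.

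First, I would bound the pure-state expression from Lemma \ref{Lemma1}. Starting from
\begin{equation*}
\|\mathcal{B}(|\psi\>\<\psi|)\|_{\tr}=\mathcal{C}_{\max}+2S\sum_{j<k}\lambda_j\lambda_k,
\end{equation*}
I would use the algebraic identity $2\sum_{j<k}\lambda_j\lambda_k=\bigl(\sum_{j=1}^{r}\lambda_j\bigr)^2-\sum_{j=1}^{r}\lambda_j^2=\bigl(\sum_{j=1}^{r}\lambda_j\bigr)^2-1$. The Cauchy--Schwarz inequality applied to the vector $(\lambda_1,\ldots,\lambda_r)$ and $(1,\ldots,1)$ gives $\bigl(\sum_{j=1}^{r}\lambda_j\bigr)^2\leq r\sum_{j=1}^{r}\lambda_j^2=r$, so $2\sum_{j<k}\lambda_j\lambda_k\leq r-1$. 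Hence for any pure state of Schmidt rank at most $r$,
\begin{equation*}
\|\mathcal{B}(|\psi\>\<\psi|)\|_{\tr}\leq \mathcal{C}_{\max}+(r-1)S.
\end{equation*}

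Second, I would lift the bound to mixed states using the definition of Schmidt number: if $\rho$ has Schmidt number at most $r$, there exists a convex decomposition $\rho=\sum_{i}p_i|\psi_i\>\<\psi_i|$ in which every pure component $|\psi_i\>$ has Schmidt rank at most $r$. Since $\mathcal{B}$ is linear in its argument (the coefficients $\Tr[\rho(P_{\alpha,k}\otimes P_{\beta,\ell})]$ are linear in $\rho$), we have $\mathcal{B}(\rho)=\sum_{i}p_i\,\mathcal{B}(|\psi_i\>\<\psi_i|)$. The triangle inequality for the trace norm, together with the pure-state bound, then yields
\begin{equation*}
\|\mathcal{B}(\rho)\|_{\tr}\leq \sum_{i}p_i\,\|\mathcal{B}(|\psi_i\>\<\psi_i|)\|_{\tr}\leq \sum_{i}p_i\bigl[\mathcal{C}_{\max}+(r-1)S\bigr]=\mathcal{C}_{\max}+(r-1)S,
\end{equation*}
which is the claimed inequality.

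The argument is essentially a two-line consequence of Lemma \ref{Lemma1} once one sees the Cauchy--Schwarz bound on $\sum_j\lambda_j$, so there is no serious technical obstacle. The only point worth flagging is the implicit appeal to the operational definition of Schmidt number via convex decompositions into pure states of bounded Schmidt rank; this justifies passing from the sharp pure-state estimate to the mixed-state bound without any degradation of the constant.
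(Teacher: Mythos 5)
Your proposal is correct and follows essentially the same route as the paper: reduce to pure states of Schmidt rank at most $r$ via linearity of $\mathcal{B}$ and the triangle inequality for the trace norm, then apply Lemma \ref{Lemma1} together with the Cauchy--Schwarz bound $\bigl(\sum_{j=1}^r\lambda_j\bigr)^2\leq r$ (the paper cites this as $\sum_j\lambda_j\leq\sqrt r$). Your write-up is in fact slightly more explicit than the paper's about the convex-roof definition of Schmidt number and the linearity of $\mathcal{B}$, but the argument is the same.
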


\begin{proof}
The trace norm is convex, so it is enough to prove this theorem for pure states of rank $r$. From eq. (\ref{Bpsi}) in Lemma \ref{Lemma1} and $\sum_{j=1}^r\lambda_j\leq\sqrt{r}$ \cite{Terhal}, one has
\begin{equation}
\|\mathcal{B}(|\psi\>\<\psi|)\|_{\tr}=\kappa_++\kappa_-\sum_{i,j=1}^r\lambda_i\lambda_j
\leq \kappa_++r\kappa_-=\mathcal{C}_{\max}+(r-1)S.
\end{equation}
\end{proof}

Our results generalize the Schmidt number criterion obtained from SIC POVMs and MUBs \cite{Morelli}, general SIC POVMs \cite{Wangs}, and $(N,M)$-POVMs \cite{Schmidt_NM}. Moreover, a simple sufficient separability condition immediately follows.

\begin{Corollary}\label{cor}
If $\rho$ is separable, then $r=1$ and
\begin{equation}
\|\mathcal{B}(\rho)\|_{\tr}\leq \mathcal{C}_{\max},
\end{equation}
which reproduces some known separability criteria: ESIC (for SIC POVMs \cite{ESIC} and general SIC POVMs \cite{gESIC}) and generalized ESIC (for $(N,M)$-POVMs \cite{SIC-MUB} and generalized symmetric measurements \cite{SIC-MUB_general}).
\end{Corollary}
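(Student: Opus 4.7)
The plan is to obtain this corollary as an immediate specialization of the preceding Schmidt-number theorem to $r=1$, and then to cross-check that the resulting inequality coincides with each of the separability criteria cited.

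First, I would recall that a bipartite mixed state $\rho$ is separable precisely when its Schmidt number equals $1$, that is, when $\rho$ admits a convex decomposition into pure product states. Substituting $r=1$ in the bound of the preceding theorem yields $\|\mathcal{B}(\rho)\|_{\tr}\leq \mathcal{C}_{\max}+(1-1)S=\mathcal{C}_{\max}$, which is exactly the inequality to be proved. No additional analytic work is required at this stage: the convexity of the trace norm, already exploited in the parent theorem to lift the Lemma 2 bound from pure to mixed states, automatically propagates the bound to mixed separable $\rho$.

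Second, to substantiate the remark that the stated criterion reproduces known results, I would substitute the entries of $\mathcal{C}_{\max}$ tabulated in Table 1, class by class. For SIC POVMs, $\mathcal{C}_{\max}=2/[d(d+1)]$ recovers the ESIC bound; for general SIC POVMs, $\mathcal{C}_{\max}=(b+1)/[d(d+1)]$ recovers the gESIC bound; for $(N,M)$-POVMs, $\mathcal{C}_{\max}=d(b+1)/[NM(d+1)]$ reproduces the generalized ESIC bound, and analogous substitutions cover the MUB/MUM subfamilies and the generalized symmetric measurements of \cite{SIC-MUB_general}.

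The one nontrivial point, and the main bookkeeping obstacle, is that GEAMs are informationally overcomplete and each subframe $\mathcal{P}_\alpha$ sums to $\gamma_\alpha\mathbb{I}_d$ rather than to $\mathbb{I}_d$, so the index of coincidence and its maximum carry different prefactors than in references that work with per-frame probability vectors normalized to one. The verification therefore amounts to matching normalization constants between the GEAM conventions adopted here and the conventions in the cited papers, rather than to producing any new estimate.
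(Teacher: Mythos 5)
Your proposal is correct and matches the paper's (implicit) argument exactly: the corollary is obtained by setting $r=1$ in the preceding Schmidt-number theorem, with the identification of the cited criteria being a matter of substituting the values of $\mathcal{C}_{\max}$ from Table 1 and matching normalization conventions. The paper offers no further proof, so nothing is missing from your approach.
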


\section{Concurrence lower bound}

An alternative measure of quantum entanglement is concurrence \cite{Wooters1,Wooters2}. On bipartite pure states $\psi$, it is defined by \cite{Rungta,Bhaskara}
\begin{equation}
\mathcal{N}(\psi)=\sqrt{2[1-\Tr(\rho_1^2)]},\qquad \rho_1=\Tr_2(|\psi\>\<\psi|),
\end{equation}
where $\Tr_2 X$ is a partial trace of $X$ with respect to the second subsystem. The mixed states extension is typically established through the convex roof
\cite{Demkowicz}
\begin{equation}
\mathcal{N}(\rho)=\min_{\{p_j,\psi_j\}}\sum_jp_j\mathcal{N}(\psi_j).
\end{equation}
The minimum is taken over all the pure state decompositions $\rho=\sum_jp_j|\psi_j\>\<\psi_j|$ with probability distributions $p_j$. This optimization requirement makes calculating $\mathcal{N}(\rho)$ analytically an NP-hard problem. Therefore, of particular interest are derivations of its lower bound. Here, we follow the methodology and generalize the results of ref. \cite{lower_conc} derived for $(N,M)$-POVMs.

\begin{Theorem}
The concurrence of a mixed bipartite state $\rho$ is lower bounded by
\begin{equation}
\mathcal{N}_{\min}(\rho)=
\eta\Big[\|\mathcal{B}(\rho)\|_{\tr}-\xi\Big],
\end{equation}
where $\mathcal{B}(\rho)$ is the correlation operator from eq. (\ref{Bpsi}) and
\begin{equation}
\eta=\frac{1}{S}\sqrt{\frac{2}{d(d-1)}},\qquad \xi=\mathcal{C}_{\max}.
\end{equation}
\end{Theorem}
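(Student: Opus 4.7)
The plan is to reduce the claim to pure states via the convex-roof definition of concurrence, then use the Schmidt decomposition together with Lemma~\ref{Lemma1} to rewrite both sides of the inequality purely in terms of the Schmidt coefficients $\lambda_j$, and finally close the estimate with a single Cauchy--Schwarz step.

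First I would handle the pure-state case. For $|\psi\>=\sum_{j=1}^r\lambda_j|e_j\>\otimes|f_j\>$ with $\sum_j\lambda_j^2=1$, the reduced state has purity $\Tr(\rho_1^2)=\sum_j\lambda_j^4$, so
\begin{equation*}
\mathcal{N}(\psi)=\sqrt{2\Big(1-\sum_j\lambda_j^4\Big)}=2\sqrt{\sum_{i<j}\lambda_i^2\lambda_j^2}.
\end{equation*}
By Lemma~\ref{Lemma1}, $\|\mathcal{B}(|\psi\>\<\psi|)\|_{\tr}-\mathcal{C}_{\max}=2S\sum_{i<j}\lambda_i\lambda_j$, so the target inequality $\mathcal{N}(\psi)\geq\eta[\|\mathcal{B}(|\psi\>\<\psi|)\|_{\tr}-\xi]$ becomes
\begin{equation*}
\sqrt{\sum_{i<j}\lambda_i^2\lambda_j^2}\;\geq\;\sqrt{\tfrac{2}{d(d-1)}}\,\sum_{i<j}\lambda_i\lambda_j,
\end{equation*}
after substituting $\eta S=\sqrt{2/[d(d-1)]}$. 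Applying Cauchy--Schwarz to the pairs $(i,j)$ with $i<j$ gives $\bigl(\sum_{i<j}\lambda_i\lambda_j\bigr)^2\leq\binom{r}{2}\sum_{i<j}\lambda_i^2\lambda_j^2$, which rearranges to exactly the desired bound once one observes that $r(r-1)\leq d(d-1)$ since $r\leq d$.

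Next I would lift the bound to mixed states. For any decomposition $\rho=\sum_jp_j|\psi_j\>\<\psi_j|$, the pure-state bound applied term-by-term gives
\begin{equation*}
\sum_jp_j\mathcal{N}(\psi_j)\;\geq\;\eta\Bigl[\sum_jp_j\|\mathcal{B}(|\psi_j\>\<\psi_j|)\|_{\tr}-\xi\Bigr].
\end{equation*}
Since $\mathcal{B}$ is linear in its argument (as is clear from eq.~(\ref{bro})) and the trace norm is convex, the triangle inequality yields $\sum_jp_j\|\mathcal{B}(|\psi_j\>\<\psi_j|)\|_{\tr}\geq\|\mathcal{B}(\rho)\|_{\tr}$. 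Taking the minimum over all decompositions on the left then produces the stated lower bound on $\mathcal{N}(\rho)$.

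The only step with any real content is the Cauchy--Schwarz estimate that controls $\sum_{i<j}\lambda_i\lambda_j$ by $\sqrt{\sum_{i<j}\lambda_i^2\lambda_j^2}$; the rest is bookkeeping. The one subtle point to flag is that the resulting constant depends on $d$ rather than on the actual Schmidt rank $r$, which is what allows the bound to be stated uniformly for arbitrary mixed states but also explains why it is generally not tight beyond $r=d$.
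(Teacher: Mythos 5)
Your proof is correct and follows the same overall skeleton as the paper's: reduce to pure states via the convex roof, apply Lemma~\ref{Lemma1} to express $\|\mathcal{B}(|\psi\>\<\psi|)\|_{\tr}$ through the Schmidt coefficients, and lift to mixed states with the triangle inequality for the trace norm (using linearity of $\mathcal{B}$). The one place where you genuinely diverge is the key pure-state estimate: the paper simply imports from ref.~\cite{Albeverio} the bound $\mathcal{N}(\psi)\geq\frac{2\sqrt{2}}{\sqrt{d(d-1)}}\sum_{j<k}\lambda_j\lambda_k$ and reads off $\eta$ and $\xi$ by matching coefficients, whereas you derive that bound from scratch by computing $\mathcal{N}(\psi)=2\bigl(\sum_{i<j}\lambda_i^2\lambda_j^2\bigr)^{1/2}$ and applying Cauchy--Schwarz over the $\binom{r}{2}$ pairs together with $r(r-1)\leq d(d-1)$. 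This makes your argument self-contained and makes transparent where the constant $\sqrt{2/[d(d-1)]}$ comes from, as well as why the bound loses tightness when the Schmidt rank is strictly below $d$; the paper's version is shorter but opaque on both points. (One small wording slip: you say the bound is ``not tight beyond $r=d$,'' but since $r\leq d$ always, you presumably mean it is not tight for $r<d$.)
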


\begin{proof}
Assume that $\rho=\sum_jp_j|\psi_j\>\<\psi_j|$ is the optimal pure state decomposition of $\rho$. Then, its concurrence reads $\mathcal{N}(\rho)=\sum_jp_j\mathcal{N}(\psi_j)$. Now, it is enough to show that, for any pure state $\psi$,
\begin{equation}\label{conc}
\mathcal{N}(\psi)\geq\mathcal{N}_{\min}(\psi)=\eta\Big[\|\mathcal{B}(|\psi\>\<\psi|)\|_{\tr}-\xi\Big]
\end{equation}
for some constants $\eta$ and $\xi$ that have to be derived. Note that, for the mixed state $\rho$, eq. (\ref{conc}) implies
\begin{equation}
\mathcal{N}(\rho)=\sum_jp_j\mathcal{N}(\psi_j)\geq 
\eta\left[\sum_jp_j\|\mathcal{B}(|\psi_j\>\<\psi_j|)\|_{\tr}-\xi\right]
\geq \eta\left[\Big\|\sum_jp_j\mathcal{B}(|\psi_j\>\<\psi_j|)\Big\|_{\tr}-\xi\right]
=\eta\Big[\|\mathcal{B}(\rho)\|_{\tr}-\xi\Big],
\end{equation}
where we used the triangle inequality
\begin{equation}
\|\mathcal{B}(\rho)\|_{\tr}\leq \sum_jp_j\|\mathcal{B}(|\psi_j\>\<\psi_j|)\|_{\tr}.
\end{equation}
Next, from eq. (\ref{Bpsi}) in Lemma \ref{Lemma1}, we take the formula for $\|\mathcal{B}(|\psi_j\>\<\psi_j|)\|_\tr$ and substitute it into eq. (\ref{conc}). This results in
\begin{equation}
\mathcal{N}_{\min}(\psi)=\eta\left[\mathcal{C}_{\max}+2S\sum_{j<k}\lambda_j\lambda_k-\xi\right].
\end{equation}
Finally, from ref. \cite{Albeverio}, we know that
\begin{equation}
\mathcal{N}_{\min}(\psi)=\frac{2\sqrt{2}}{\sqrt{d(d-1)}}\sum_{j<k}\lambda_j\lambda_k,
\end{equation}
and hence we recover the values for $\eta$ and $\xi$.
\end{proof}

If $\rho$ is separable, then $\mathcal{N}(\rho)=0$, which leads back to the results of Corollary \ref{cor}.

\section{Conclusions}

In this paper, we analyzed the applicational properties of conical 2-designs constructed from generalized equiangular measurements. Our main result is proving that many quantum measures often considered in the literature, when constructed from these measurement operators, depend only on two constant parameters. This was shown on the example of entropic uncertainty relations, the Brukner-Zeilinger invariants, quantum coherence, quantum concurrence, and the Schmidt-number criterion. Our results generalize the formulas so far obtained for subclasses of GEAMs, while also offering a concise, simple form for each measure. The number and variety of quantities fully characterized by two measurement parameters indicates on a more general property of conical 2-designs. We conjecture that there exist strictly defined conditions under which measures constructed from GEAMs are only functions of $S$ and $\mathcal{C}_{\max}$ (alternatively, $\kappa_+$ and $\kappa_-$). This is a very interesting open problems for future studies.

\section{Acknowledgements}

This research was funded in whole or in part by the National Science Centre, Poland, Grant number 2021/43/D/ST2/00102. For the purpose of Open Access, the author has applied a CC-BY public copyright licence to any Author Accepted Manuscript (AAM) version arising from this submission.

\bibliography{C:/Users/cyndaquilka/OneDrive/Fizyka/bibliography}

\begin{thebibliography}{100}
\providecommand{\url}[1]{\texttt{#1}}
\providecommand{\urlprefix}{URL }
\providecommand{\eprint}[2][]{\url{#2}}

\bibitem{Renes}
J.~M. Renes, R.~Blume-Kohout, A.~J. Scott, and C.~M. Caves, J. Math. Phys.
  \textbf{45}, 2171 (2004).

\bibitem{Schwinger}
J.~Schwinger, Proc. Nat. Acad. Sci. U.S.A. \textbf{46}, 570 (1960).

\bibitem{Szarek}
M.~B. Ruskai, S.~Szarek, and E.~Werner, Linear Algebra Appl. \textbf{347(1-3)},
  159--187 (2002).

\bibitem{Kalev}
A.~Kalev and G.~Gour, New J. Phys. \textbf{16}, 053038 (2014).

\bibitem{Gour}
A.~Kalev and G.~Gour, J. Phys. A: Math. Theor. \textbf{47}, 335302 (2014).

\bibitem{Prugovecki}
E.~Prugove{\v{c}}ki, Int. J. Theor. Phys. \textbf{16}, 321--331 (1977).

\bibitem{Scott}
A.~J. Scott, J. Phys. A: Math. Gen. \textbf{39}, 13507 (2006).

\bibitem{ZhuEnglert}
H.~Zhu and B.-G. Englert, Phys. Rev. A \textbf{84}, 022327 (2011).

\bibitem{PetzRuppert}
D.~Petz and L.~Ruppert, Rep. Math. Phys. \textbf{69}, 161 (2012).

\bibitem{Pimenta}
W.~M. Pimenta, B.~Marques, T.~O. Maciel, R.~O. Vianna, A.~Delgado, C.~Saavedra,
  and S.~P{\'a}dua, Phys. Rev. A \textbf{88}, 012112 (2013).

\bibitem{Bent}
N.~Bent, H.~Qassim, A.~A. Tahir, D.~Sych, G.~Leuchs, L.~L. S{\'a}nchez-Soto,
  E.~Karimi, and R.~W. Boyd, Phys. Rev. X \textbf{5}, 041006(R) (2015).

\bibitem{Brunner}
N.~Brunner, M.~Navascu{\'e}s, and T.~V{\'e}rtesi, Phys. Rev. Lett.
  \textbf{110}, 150501 (2013).

\bibitem{Zhou}
P.~Zhou, J. Phys. A: Math. Theor. \textbf{45}, 215305 (2012).

\bibitem{Song}
J.-F. Song and Z.-Y. Wang, Int. J. Theor. Phys. \textbf{50}, 2410 (2011).

\bibitem{Bouchard}
F.~Bouchard, K.~Heshami, D.~England, R.~Fickler, R.~W. Boyd, B.-G. Englert,
  L.~L. S{\'a}nchez-Soto, and E.~Karimi, Quantum \textbf{2}, 111 (2018).

\bibitem{Lai2}
L.-M. Lai, T.~Li, S.-M. Fei, and Z.-X. Wang, Quant. Inf. Proc. \textbf{19}, 93
  (2020).

\bibitem{OperationalSICs}
A.~Tavakoli, M.~Farkas, D.~Rosset, J.-D. Bancal, and J.~Kaniewski, Sci. Adv.
  \textbf{7}, eabc3847 (2021).

\bibitem{Bene}
T.~V{\'e}rtesi and E.~Bene, Phys. Rev. A \textbf{82}, 062115 (2010).

\bibitem{Siendong}
S.~Huang, Phys. Lett. A \textbf{377}, 448 (2013).

\bibitem{ESIC}
J.~Shang, A.~Asadian, H.~Zhu, and O.~G{\"{u}}hne, Phys. Rev. A \textbf{98},
  022309 (2018).

\bibitem{EW-SIC}
T.~Li, L.-M. Lai, D.-F. Liang, S.-M. Fei, and Z.-X. Wang, Int. J. Theor. Phys.
  \textbf{59}, 3549--3557 (2020).

\bibitem{KalevBae}
A.~Kalev and J.~Bae, Phys. Rev. A \textbf{87}, 062314 (2013).

\bibitem{Blume}
R.~Blume-Kohout, J.~O.~S. Yin, and S.~J. van Enk, Phys. Rev. Lett.
  \textbf{105}, 170501 (2010).

\bibitem{SM_Pmaps}
J.~Li, H.~Yao, S.-M. Fei, Z.~Fan, and H.~Ma, Phys. Rev. A \textbf{109}, 052426
  (2024).

\bibitem{Medendorp}
Z.~E.~D. Medendorp, F.~A. Torres-Ruiz, L.~K. Shalm, G.~N.~M. Tabia, C.~A.
  Fuchs, and A.~M. Steinberg, Phys. Rev. A \textbf{83}, 051801(R) (2011).

\bibitem{Tavakoli}
A.~Tavakoli, D.~Rosset, and M.-O. Renou, Phys. Rev. Lett. \textbf{122}, 070501
  (2019).

\bibitem{Tavakoli2}
A.~Tavakoli, M.~Smania, T.~V{\'e}rtesi, N.~Brunner, and M.~Bourennane, Sci.
  Adv. \textbf{6}, 16 (2020).

\bibitem{Smania}
M.~Smania, P.~Mironowicz, M.~Nawareg, M.~Pawlowski, A.~Cabello, and
  M.~Bourennane, Optica \textbf{7}, 123 (2020).

\bibitem{Hirsch}
F.~Hirsch, M.~T. Quintino, J.~Bowles, and N.~Brunner, Phys. Rev. Lett.
  \textbf{111}, 160402 (2013).

\bibitem{Yoshida}
M.~Yoshida and G.~Kimura, Phys. Rev. A \textbf{106}, 022408 (2022).

\bibitem{semi-SIC}
I.~J. Geng, K.~Golubeva, and G.~Gour, Phys. Rev. Lett. \textbf{126}, 100401
  (2021).

\bibitem{EOM22}
L.~Feng and S.~Luo, Phys. Lett. A \textbf{445}, 128243 (2022).

\bibitem{EOM24}
L.~Feng, S.~Luo, Y.~Zhao, and Z.~Guo, Phys. Rev. A \textbf{109}, 012218 (2024).

\bibitem{EOMq3}
Y.~Zhao, Z.~Guo, L.~Feng, S.~Luo, and T.-L. Lee, Phys. Lett. A \textbf{495},
  129314 (2024).

\bibitem{SIC-MUB}
K.~Siudzi{\'n}ska, Phys. Rev. A \textbf{105}, 042209 (2022).

\bibitem{SIC-MUB_general}
K.~Siudzi{\'n}ska, J. Phys. A: Math. Theor. \textbf{57}, 355301 (2024).

\bibitem{Rastegin_GEAM}
A.~Rastegin, Ann. Phys. \textbf{470}, 169785 (2024).

\bibitem{steering}
K.~Siudzi\'{n}ska, Phys. Rev. A \textbf{111}, 062207 (2025).

\bibitem{Kirkwood}
A.~E. Rastegin, APL Quantum \textbf{2}, 026108 (2025).

\bibitem{GEAM_Pmaps}
K.~Siudzi\'{n}ska, \textit{Entanglement witnesses and separability criteria
  based on generalized equiangular tight frames} (2025), arXiv:2411.07065
  [quant-ph].

\bibitem{GEAM}
K.~Siudzi{\'n}ska, J. Phys. A: Math. Theor. \textbf{57}, 355302 (2024).

\bibitem{Strohmer}
T.~Strohmer, Linear Algebra Appl. \textbf{429}, 326 (2008).

\bibitem{Graydon}
M.~A. Graydon and D.~M. Appleby, J. Phys. A: Math. Theor. \textbf{49}, 085301
  (2016).

\bibitem{Graydon2}
M.~A. Graydon and D.~M. Appleby, J. Phys. A: Math. Theor. \textbf{49}, 33LT02
  (2016).

\bibitem{Rastegin5}
A.~E. Rastegin, Eur. Phys. J. D \textbf{67}, 269 (2013).

\bibitem{Renyi}
A.~R\'{e}nyi, \textit{On measures of entropy and information}, in
  \textit{Proceedings of 4th Berkeley symposium on mathematical statistics and
  probability, Vol. I, 547-561},  University of California Press, Berkeley
  1961.

\bibitem{Tsallis}
C.~Tsallis, J. Stat. Phys. \textbf{52}, 479--487 (1988).

\bibitem{Wajs}
M.~Wajs, P.~Kurzy{\'n}ski, and D.~Kaszlikowski, Phys. Rev. A \textbf{91},
  012114 (2015).

\bibitem{Kurzyk}
D.~Kurzyk, {\L}.~Pawela, and Z.~Pucha{\l}a, Quantum Inf. Process. \textbf{17},
  193 (2018).

\bibitem{imaginarity}
J.~Xu, Phys. Lett. A \textbf{528}, 130024 (2024).

\bibitem{Franchini}
F.~Franchini, A.~R. Its, and V.~E. Korepin, J. Phys. A: Math. Theor.
  \textbf{41}, 025302 (2008).

\bibitem{Bertini}
B.~Bertini, K.~Klobas, V.~Alba, G.~Lagnese, and P.~Calabrese, Phys. Rev. X
  \textbf{12}, 031016 (2022).

\bibitem{Mannai}
M.~Mannai, H.~Sati, T.~Byrnes, and C.~Radhakrishnan, Sci. Rep. \textbf{15}, 530
  (2025).

\bibitem{ERE}
Y.-X. Wang, L.-Z. Mu, V.~Vedral, and H.~Fan, Phys. Rev. A \textbf{93}, 022324
  (2016).

\bibitem{SICMUB_entropic}
F.~Huang, L.~Tang, and M.-Q. Bai, Int. J. Theor. Phys. \textbf{62}, 126 (2023).

\bibitem{Sanchez}
J.~S{\'{a}}nchez, Phys. Lett. A \textbf{173}, 233--239 (1993).

\bibitem{Maassen}
H.~Maassen and J.~B.~M. Uffink, Phys. Rev. Lett. \textbf{60}, 1103 (1988).

\bibitem{Dragomir}
S.~S. Dragomir, B.Aust. Math. Soc. \textbf{87}, 177--194 (2013).

\bibitem{ChenFei}
B.~Chen and S.-M. Fei, Quant. Inf. Proc. \textbf{14}, 2227--2238 (2015).

\bibitem{entropic_GSIC}
S.~Huang, Z.-B. Chen, and S.~Wu, Phys. Rev. A \textbf{103}, 042205 (2021).

\bibitem{BZI1}
\v{C}. Brukner and A.~Zeilinger, Phys. Rev. Lett. \textbf{83}, 3354 (1999).

\bibitem{BZI2}
\v{C}. Brukner and A.~Zeilinger, Phys. Rev. A \textbf{63}, 022113 (2003).

\bibitem{BZI5}
A.~E. Rastegin, Proc. R. Soc. A \textbf{471}, 20150435 (2015).

\bibitem{BZI3}
S.~Luo, Theor. Math. Phys. \textbf{151}, 693--699 (2007).

\bibitem{BZI4}
J.~{\v{R}eh\'{a}\v{c}ek} and Z.~Hradil, Phys. Rev. Lett. \textbf{88}, 130401
  (2002).

\bibitem{Werner_basis}
R.~F. Werner, J. Phys. A: Math. Gen. \textbf{34}, 7081 (2001).

\bibitem{Ohno}
H.~Ohno and D.~Petz, Acta Math. Hungar. \textbf{124}, 165--177 (2009).

\bibitem{BZI6}
B.~Chen and S.-M. Fei, Commun. Theor. Phys. \textbf{72}, 065106 (2020).

\bibitem{SICMUB_BZ}
L.~Tang, F.~Wu, Z.~Mo, and M.~Bai, Phys. Scr. \textbf{98}, 125225 (2023).

\bibitem{Gour_thermo}
V.~Narasimhachar and G.~Gour, Nature Commun. \textbf{6}, 7689 (2015).

\bibitem{Michal_thermo}
P.~{\'C}wikli{\'n}ski, M.~Studzi{\'n}ski, M.~Horodecki, and J.~Oppenheim, Phys.
  Rev. Lett. \textbf{115}, 210403 (2015).

\bibitem{Lostaglio_thermo}
M.~Lostaglio, D.~Jennings, and T.~Rudolph, Nature Commun. \textbf{6}, 6383
  (2015).

\bibitem{Bibak}
F.~Bibak, F.~D. Santo, and B.~Daki{\'c}, Phys. Rev. Lett. \textbf{133}, 230201
  (2024).

\bibitem{Mintert}
B.~Witt and F.~Mintert, New J. Phys. \textbf{15}, 093020 (2013).

\bibitem{LiuFan}
S.-Y. Liu and H.~Fan, Chinese Phys. B \textbf{32}, 110304 (2023).

\bibitem{coh_entropy}
A.~Rastegin, Phys. Rev. A \textbf{93}, 032136 (2016).

\bibitem{coh_fidelity}
L.-H. Shao, Z.~Xi, H.~Fan, and Y.~Li, Phys. Rev. A \textbf{91}, 042120 (2015).

\bibitem{Rana}
S.~Rana, P.~Parashar, and M.~Lewenstein, Phys. Rev. A \textbf{93}, 012110
  (2016).

\bibitem{l1norm}
T.~Baumgratz, M.~Cramer, and M.~Plenio, Phys. Rev. Lett. \textbf{113}, 140401
  (2014).

\bibitem{Girolami}
D.~Girolami, Phys. Rev. Lett. \textbf{113}, 170401 (2014).

\bibitem{WY}
E.~P. Wigner and M.~M. Yanase, Proceedings of the National Academy of Sciences
  of the United States of America \textbf{49}, 410 (1963).

\bibitem{WYD}
E.~H. Lieb, Adv. Math. \textbf{11}, 267--288 (1973).

\bibitem{WYD2}
E.~H. Lieb and M.~B. Ruskai, Phys. Rev. Lett. \textbf{30}, 434--436 (1973).

\bibitem{LiebRuskai}
E.~H. Lieb and M.~B. Ruskai, Phys. Rev. Lett. \textbf{30}, 434 (1973).

\bibitem{GWYD}
S.~L. P.~Chen, Front. Math. China \textbf{2}, 359--381 (2007).

\bibitem{CaiLuo}
L.~Cai and S.~Luo, Lett. Math. Phys. \textbf{83}, 253--264 (2008).

\bibitem{QUR2}
H.~Huang, Z.~Wu, and S.~Fei, Europhys. Lett. \textbf{132}, 60007 (2020).

\bibitem{Averaged_WYD}
X.~Li, D.~Li, H.~Huang, X.~Li, and L.~C. Kwek, Eur. Phys. J. D \textbf{64},
  147--153 (2011).

\bibitem{coh_mix}
Q.~Zhang and S.~Fei, J. Phys. A, Math. Theor. \textbf{57}, 235301 (2024).

\bibitem{coh_mix2}
X.-J. Che, Y.-H. Tao, Y.-H. Sheng, S.-H. Wu, and S.-M. Fei, Results Phys.
  \textbf{52}, 106794 (2023).

\bibitem{QUR}
S.~Luo and Y.~Sun, Phys. Rev. A \textbf{96}, 022130 (2017).

\bibitem{SICMUB_App3}
L.~Tang and F.~Wu, Quantum Inf. Process. \textbf{22}, 65 (2023).

\bibitem{GWYD_SICMUB}
R.-X. Chen, Y.~Long, Y.-S. Song, L.~Tang, and F.~Wu, Phys. Lett. A
  \textbf{551}, 130610 (2025).

\bibitem{coh_WYD}
Y.~J. Fan, X.~Guo, Z.~H. Guo, and M.~Zhang, Mod. Phys. Lett. A \textbf{39},
  2450011 (2024).

\bibitem{Terhal}
B.~M. Terhal and P.~Horodecki, Phys. Rev. A \textbf{61}, 040301(R) (2000).

\bibitem{Sperling}
J.~Sperling and W.~Vogel, Phys. Scr. \textbf{83}, 045002 (2011).

\bibitem{TOPICAL}
D.~Chru\'sci\'nski and G.~Sarbicki, J. Phys. A: Math. Theor. \textbf{47},
  483001 (2014).

\bibitem{Johnston4}
N.~Johnston, Linear Multilinear A \textbf{62}, 648 (2014).

\bibitem{Klockl}
C.~Klockl and M.~Huber, Phys. Rev. A \textbf{91}, 042339 (2015).

\bibitem{Liu4}
S.~Liu, Q.~He, M.~Huber, O.~G{\"u}hne, and G.~Vitagliano, PRX Quantum
  \textbf{4}, 020324 (2023).

\bibitem{Liu5}
S.~Liu, M.~Fadel, Q.~He, M.~Huber, and G.~Vitagliano, Quantum \textbf{8}, 1236
  (2024).

\bibitem{Hulpke}
F.~Hulpke, D.~Bruss, M.~Lewenstein, and A.~Sanpera, Quantum Inf. Comput.
  \textbf{4}, 207 (2004).

\bibitem{Sanpera}
A.~Sanpera, D.~Bru{\ss}, and M.~Lewenstein, Phys. Rev. A \textbf{63}, 050301
  (2001).

\bibitem{Wyderka}
N.~Wyderka, G.~Chesi, H.~Kampermann, C.~Macchiavello, and D.~Bru{\ss}, Phys.
  Rev. A \textbf{107}, 022431 (2023).

\bibitem{Shi}
X.~Shi, Commun. Theor. Phys. \textbf{76}, 085103 (2024).

\bibitem{lower_conc}
H.-F. Wang and S.-M. Fei, Phys. Rev. A \textbf{111}, 032432 (2025).

\bibitem{Morelli}
A.~Tavakoli and S.~Morelli, Phys. Rev. A \textbf{110}, 062417 (2024).

\bibitem{Wangs}
Z.~Wang, B.~Sun, S.~Fei, and Z.~Wang, Quantum Inf. Process. \textbf{23}, 401
  (2024).

\bibitem{Schmidt_NM}
H.-F. Wang and S.-M. Fei, \textit{Schmidt number criterion via symmetric
  measurements} (2025), arXiv:2505.02297 [quant-ph].

\bibitem{gESIC}
J.~Li and L.~Chen, J. Phys. A: Math. Theor. \textbf{55}, 015302 (2022).

\bibitem{Wooters1}
W.~K. Wootters, Phys. Rev. Lett. \textbf{80}, 2245 (1998).

\bibitem{Wooters2}
S.~Hill and W.~K. Wootters, Phys. Rev. Lett. \textbf{78}, 5022 (1997).

\bibitem{Rungta}
P.~Rungta, V.~Buzek, C.~M. Caves, M.~Hillery, and G.~Milburn, Phys. Rev. A
  \textbf{64}, 042315 (2001).

\bibitem{Bhaskara}
V.~S. Bhaskara and P.~K. Panigrahi, Quantum Inf. Process. \textbf{16}, 118
  (2017).

\bibitem{Demkowicz}
R.~Demkowicz-Dobrza{\'n}ski, A.~Buchleitner, M.~Ku{\'s}, and F.~Mintert, Phys.
  Rev. A \textbf{74}, 052303 (2006).

\bibitem{Albeverio}
K.~Chen, S.~Albeverio, and S.-M. Fei, Phys. Rev. Lett. \textbf{95}, 040504
  (2005).

\end{thebibliography}
\bibliographystyle{C:/Users/cyndaquilka/OneDrive/Fizyka/beztytulow2}

%
%

\end{document}